\documentclass[floatfix,twocolumn,aps,superscriptaddress,nofootinbib,pra,10pt]{revtex4-2}

\usepackage{bm}
\usepackage{amsthm}
\usepackage{amsfonts,amssymb,amsmath}
\usepackage{mathtools}
\usepackage{physics}

\usepackage{xcolor}
\usepackage{graphicx}
\usepackage[T1]{fontenc}
\usepackage{bbold}
\usepackage{url}
\usepackage{float}
\usepackage[utf8]{inputenc}
\usepackage{csquotes}
\usepackage[colorlinks=true]{hyperref}
\hypersetup{
    allcolors  = {blue},
    breaklinks = {true},
}
\usepackage[english]{babel}
\usepackage[normalem]{ulem}
\usepackage{xfrac}
\usepackage{stmaryrd}
\usepackage{microtype}
\allowdisplaybreaks

\usepackage{tikz}
\usetikzlibrary{shapes, shapes.geometric, shapes.symbols, shapes.arrows, shapes.multipart, shapes.callouts, shapes.misc,decorations.pathmorphing}
\tikzset{snake it/.style={decorate, decoration=snake}}

\theoremstyle{definition}
\newtheorem{dfn/}{Definition}

\newenvironment{dfn}
  {%
   \pushQED{\qed}\begin{dfn/}}
  {\popQED\end{dfn/}}

\newtheorem*{dem/}{Proof}
\newenvironment{dem}
  {%
   \pushQED{\qed}\begin{dem/}}
  {\popQED\end{dem/}}

\newenvironment{exm}
  {%
  \pushQED{\qed}\exmx}
  {\popQED\endexmx}

\theoremstyle{plain}
\newtheorem{theorem}{Theorem}

\newtheorem{lemma}{Lemma}
\newtheorem{coro}{Corollary}

\begin{document}
\title{
Using a resource theoretic perspective to witness and engineer quantum generalized contextuality for prepare-and-measure scenarios
}
\author{Rafael Wagner}
\email{rafael.wagner@inl.int}
\affiliation{University of S\~{a}o Paulo, Institute of Physics, S\~{a}o Paulo, SP, Brazil}
\affiliation{International Iberian Nanotechnology Laboratory (INL),
Av. Mestre Jos\'{e} Veiga, 4715-330 Braga, Portugal}
\affiliation{Centro de F\'{i}sica, Universidade do Minho, Braga 4710-057, Portugal}
\author{Roberto D. Baldijão}
\affiliation{Instituto de F\'{i}sica Gleb Wataghin, Universidade Estadual de Campinas, Campinas, SP, Brazil}
\affiliation{Institute for Quantum Optics and Quantum Information, Austrian Academy of Sciences, Boltzmanngasse 3, A-1090 Vienna, Austria}
\author{Alisson Tezzin}
\affiliation{University of S\~{a}o Paulo, Institute of Physics, S\~{a}o Paulo, SP, Brazil}
\author{Bárbara Amaral}
\affiliation{University of S\~{a}o Paulo, Institute of Physics, S\~{a}o Paulo, SP, Brazil}
\date{\today} 

\begin{abstract}

We employ the resource theory of generalized contextuality as a tool for analyzing the structure of  prepare-and-measure scenarios. We argue that this framework simplifies proofs of quantum contextuality in complex scenarios and strengthens existing arguments regarding robustness of experimental implementations. As a case study, we demonstrate quantum contextuality associated with any nontrivial noncontextuality inequality for a class of useful scenarios by noticing a connection between the resource theory and measurement simulability. Additionally, we expose a formal composition rule  that allows engineering complex scenarios from simpler ones. This approach provides insights into the noncontextual polytope structure for complex scenarios and facilitates the identification of possible quantum violations of noncontextuality inequalities. 
\end{abstract}


\maketitle

\tableofcontents

\section{Introduction} 
\label{sec:outline}

Prepare-and-measure type experiments are essential setups corresponding to practical tasks, such as communication protocols, key distribution, computing, among many others. An important question in quantum information is finding quantum-over-classical advantages for these protocols. For this question to be rigorously approached, it is crucial to decide upon one of the various existing notions of \textit{classicality}. Throughout this work, we will consider the notion of classicality provided by generalized noncontextuality in Ref.~\cite{spekkens2005contextuality}. After it was proved that quantum theory is contextual in this sense~\cite{spekkens2005contextuality}, several works have shown that contextuality underpins advantages in quantum protocols when compared to their classical counterparts. Some examples are parity oblivious tasks~\cite{Pusey2018simplest,spekkens2009preparation}, quantum state discrimination tasks~\cite{schmid2018contextual}, state-dependent quantum cloning~\cite{lostaglio2019contextual}, linear response processes~\cite{lostaglio2020certifying} and quantum interrogation~\cite{wagner2022coherence}. Nonetheless, finding novel techniques to detect  quantum contextuality that are suitable to deal with complex scenarios is important for taking advantage of such a nonclassical feature in practical tasks.

In this work, we provide novel tools in such direction. Following the linear characterization of generalized noncontextuality from Ref.~\cite{schmid2018all} and the resource theory constructed from the underlying polytope structure presented in Ref.~\cite{duarte2018resource}, we develop techniques to reduce complex scenarios to smaller ones, where existence of contextuality serves as a witness for contextuality in the original scenario.  These tools allow us to reinterpret the results of Ref.~\cite{mazurek2016experimental}, as well as 
to witness the existence of quantum contextuality for a class of prepare-and-measure scenarios. Such a class encompasses the majority of already known proofs of contextual advantage and further generalizes it for a large class of experimental  realizations. We also identify a class of scenarios with quantum violations for \emph{all nontrivial nocontextuality inequalities}. Notably, we uncover that contextuality in the state-dependent cloning scenario of Ref.~\cite{lostaglio2019contextual} is inherited from a product of simpler scenarios. Overall, these results highlight the potential usages of our techniques for witnessing and engineering nonclassical correlations in complex scenarios lifted from simpler ones. 

In section~\ref{sec: Preliminaries}, we briefly review the notion of generalized noncontextuality. Section~\ref{Preliminaries: GC}  describes  prepare-and-measure scenarios and how noncontextual ontological models attempt at explaining the statistics arising from these experimental scenarios. Section~\ref{sec: PreliminariesMeasurementSim} describes measurement simulability in operational theories, while section~\ref{Preliminaries: RT} discusses the elements of the resource theory we are considering, with particular focus on the free operations. In section \ref{sec: Results} we describe our results in general terms, using a link between simulability and the free operations of a resource theory (section \ref{sec: ResultsSimFree}), and a binary composition of scenarios (section \ref{sec: ResultsComposition}). We then use this tools to witness and engineer quantum contextuality, in section \ref{sec: ResultsQuantumC}. In section \ref{sec:conclusions} we expose conclusions and perspectives. 

\section{Preliminaries}
\label{sec: Preliminaries}

\subsection{Generalized Contextuality}
\label{Preliminaries: GC}
Contextuality, as a notion of nonclassicality, is an inference between the operational description of an experimental setup and the ontological models one might prescribe to it. An operational theory is formally a process theory where processes are considered as lists of laboratory instructions with a probability rule~\cite{schmid2020structure}. To operationally describe a prepare-and-measure experiment, one must provide some set of laboratory procedures that prepare the studied system, following with a set of possible measurements that shall extract outcome information. The most general result from such procedures is described through conditional probabilities. We will consider the case of prepare-and-measure experiments with a finite set of  preparation procedures, that we denote by $\mathbb{P} := \{P_j\}_{j \in J}$, and a finite set of measurement procedures, $\mathbb{M} := \{M_i\}_{i \in I}$, leading to some outcome results that we label $\mathbb{O}_{\mathbb{M}} \equiv K$.  Capital letters $K,J$ and $I$  denote the set of labels with the same cardinality of the sets of primitives, $\mathbb{O}_{\mathbb{M}}, \mathbb{P},\mathbb{M}$ respectively, while $|\cdot|$ represents the cardinality of the set (e.g., $|I|$ is the number of measurement procedures in the experimental scenario). The measurement \emph{event} associated to obtaining outcome $k$ for measurement $M$ will be denoted $[k|M]$. Performing all the operations several times to acquire statistics will lead to a data-table of conditional probabilities, that we denote as $B$:
\begin{equation}
    B := \{p(k \vert M_i, P_j)\}_{k\in K, i\in I, j\in J}.
\end{equation}
We name $B$ as the \textit{behavior} of the system. Each physical realization will lead to some behavior $B$ in the set of all possible behaviors. 
Operationally, there are more structures within an experimental description. For instance, it might be so that there are operationally equivalent ways to generate some statistics; as a standard example, consider the quantum preparations $P_j$~\footnote{More precisely, each state $\rho_j$ defines an  \emph{equivalence class} $[P_j]$ of equivalent procedures implementing the same state.} associated to preparations of the quantum states $\rho^j$,
\begin{align}
     &\rho^1 = \ket{0}\bra{0}, \label{preparation 00}\\ 
    &\rho^2 = \ket{1}\bra{1}, \label{preparation 11}\\ 
    &\rho^3 = \ket{+}\bra{+},\label{preparation ++} \\ 
    &\rho^4 = \ket{-}\bra{-}.\label{preparation --} 
\end{align}

For these preparation procedures it is known that, 
\begin{equation}\label{simple operational eq}
    \frac{1}{2}\rho^1+\frac{1}{2}\rho^2 =  \frac{1}{2}\rho^3 + \frac{1}{2}\rho^4,
\end{equation}
meaning that the statistics for the measurement events $[k|M_i]$ will be the same for the above convex mixtures of $\{\rho^j\}_{j\in J}$. Such a description hints at what is understood as an \emph{operational equivalence}. 

\begin{dfn}[Operational equivalences]
Let $P,P'$ be two preparation procedures on an operational theory. Let $\mathcal{M}$ be a tomographically complete set of measurement procedures. Then, the procedures are operationally equivalent, and we write $P \simeq P'$, if and only if, 
\begin{equation}
    \forall [k|M], M\in \mathcal{M}, k\in \mathbb{O}_{\mathbb{M}}, \,\, p(k|M,P) = p(k|M,P').
\end{equation}
Equivalently, let $[k|M]$ and $[k'|M']$ be two measurement events, and $\mathcal{P}$ be a tomographically complete set of preparation procedures. Then, the events are operationally equivalent, and we write $[k|M] \simeq [k'|M']$, if and only if, 
\begin{equation}
    \forall P \in \mathcal{P}, \,\, p(k|M,P) = p(k'|M',P).
\end{equation}
\end{dfn}

Equivalences are part of the description of any operational theory, quantum or not; those denote the fact that some operational procedures (maybe defined as convex mixtures of others) cannot be distinguished using only the probabilities in the experiment.  A set of non-trivial, fixed and finitely defined operational equivalences for the preparation procedures is denoted by $\mathbb{E}_{\mathbb{P}}$, when  $a=1, \dots, \vert \mathbb{E}_{\mathbb{P}} \vert $,
\begin{equation}\label{preparation equivalences}
    \sum_j\alpha^a_jP_j \simeq \sum_j \beta^a_jP_j,
\end{equation}
where $\sum_j\alpha_j^a = \sum_j\beta_j^a = 1$, and $0\leq \alpha_j^a,\beta_j^a \leq 1$. At the level of the behaviors, we assume that convex mixtures of procedures will be respected, so that for all measurement events, we have
\begin{align}        
    \sum_{j} \alpha^{a}_{j}p(k|M_{i}P_{j}) =  \sum_{j} \beta^{a}_{j}p(k|M_{i}P_{j}).
\end{align}
Thus, each label $a$ uniquely defines a vector $\bm{\gamma}_{\mathbb{P}}^a \equiv (\bm{\alpha}^a,\bm{\beta}^a)$ associated with the preparation procedures,
\begin{equation}
    \bm{\gamma}_{\mathbb{P}}^a := (\alpha^a_1,\dots,\alpha^a_{|J|};\beta^a_1,\dots,\beta^a_{|J|}).
\end{equation}

Similarly for the measurement events, we define a set $\mathbb{E}_{\mathbb{M}}$, where the operational equivalences $b=1,\dots,\vert \mathbb{E}_{\mathbb{M}}\vert$,
\begin{equation}
    \sum_{i,k}\alpha_{[k\vert M_i]}^b[k\vert M_i] \simeq \sum_{i,k}\beta_{[k\vert M_i]}^b[k\vert M_i]
\end{equation}
uniquely define vectors $\bm{\gamma}_{\mathbb{M}}^b \equiv (\bm{\alpha}^b,\bm{\beta}^b)$, 
\begin{align}
    \bm{\gamma}_{\mathbb{M}}^b := (\alpha^b_{[1|M_1]},\dots,\alpha^b_{[|K||M_1]},\dots, \alpha^b_{[|K||M_{|I|}]}; \\
    \beta^b_{[1|M_1]},\dots,\beta^b_{[|K||M_1]},\dots, \beta^b_{[|K||M_{|I|}]}).
\end{align}

Hence, we define the sets $\mathbb{E}_{\mathbb{P}} := \left\{\bm{\gamma}_{\mathbb{P}}^{a}\right\}_{a=1}^{|\mathbb{E}_{\mathbb{P}}|}$ and $\mathbb{E}_{\mathbb{M}} := \left\{\bm{\gamma}_{\mathbb{M}}^{b}\right\}_{b=1}^{|\mathbb{E}_{\mathbb{M}}|}$, which completes the elements for the definition of a  scenario. We say that an operational equivalence $\bm\gamma$ as defined above is trivial if $\bm\alpha = \bm\beta$.

Nontriviality is required to get rid of self equivalences, since, for example, any preparation is always equivalent to itself in its own experimental setting. Hence, writing $\mathbb{E}_{\mathbb{M}} = \emptyset$ does not mean that there is no operational equivalences between measurement events, but that the experimentalist is not considering equivalences different from those of the form $M_1 \simeq M_1$ that represent no interesting constraints.

\begin{dfn}[Prepare-and-measure scenario]
A prepare-and-measure scenario is constituted by the tuple $\mathbb{B}$ given by
\begin{equation}
    \mathbb{B} = \left (\mathbb{P}, \mathbb{M}, \mathbb{O}_{\mathbb{M}}, \mathbb{E}_{\mathbb{P}}, \mathbb{E}_{\mathbb{M}}\right).
\end{equation}
\end{dfn}

Whenever it is convenient, and since we are mostly interested in the labels for the procedures, we might follow the notation of Ref.~\cite{chaturvedi2020characterising}, and write $\mathbb{B} = (|J|,|I|,|K|,\mathbb{E}_{\mathbb{P}},\mathbb{E}_{\mathbb{M}})$. Notice that scenarios do not need to have tomographically complete sets of procedures, $\mathbb{P} \subset \mathcal{P}$, but the operational equivalences must hold for $\mathcal{P}$, the complete set of procedures. 

As an example, which we shall consider when applying the techniques we develop in this work, is the simplest nontrivial scenario~\cite{Pusey2018simplest}.
\begin{dfn}[Simplest scenario, $\mathbb{B}_{\rm si}$]
\label{def: Simplest scenario}
The simplest nontrivial scenario, denoted $\mathbb{B}_{\rm si}$, is composed by $2$ dichotomic measurements $M_1$ and $M_2$ and $4$ preparation procedures, $\mathbb{P}:=\{P_i\}_{i=1}^4$. There are no equivalences for measurements, while preparations respect the equivalence relation 
 \begin{equation}
 \label{operational eq simplest scenario}
     \frac{1}{2} P_1 + \frac{1}{2}P_2 \simeq \frac{1}{2}P_3 + \frac{1}{2}P_4.
 \end{equation}
In our notation, we have $\mathbb{B}_{\rm si}:=(4,2,2,\mathbb{E}_{\mathbb{P}{\rm .si}},\emptyset)$, where $\mathbb{E}_{\mathbb{P}{\rm,si}}=\{(1/2,1/2,0,0;0,0,1/2,1/2)\}$. 
\end{dfn}

The characterization of behaviors in Ref.~\cite{schmid2018all}  provides a fundamental aspect for contextuality theory, when a finite set of operational procedures and equivalences are considered: the set of possible behaviors in $\mathbb{B}$ is in one-to-one correspondence with points in $\mathbb{R}^n$ forming a convex polytope. Inside this convex polytope of all behaviors obeying the operational equivalences, lies another polytope: the set of behaviors explained by \emph{noncontextual ontological models}.

\subsubsection{Ontological Models}
In order to \emph{explain} the conditional probabilities in a behavior $B \in \mathbb{B}$  we use the ontological models framework~\cite{harrigan2010einstein,leifer2014quantum,Kunjwal2019beyondcabello}. In such a framework, there exists some measurable set $(\Lambda,\Sigma)$ of so-called \textit{ontic states} $\lambda \in \Lambda$. These contain the full set of parameters representing the most accurate physical description of the system. From such a set of states, we construct probabilistic explanations for both preparation and measurement procedures in $\mathbb{B}$, such that:
\begin{align}
    &\forall P \in \mathbb{P}, \exists \mu_P, \\
    &\forall M \in \mathbb{M}, \forall k \in \mathbb{O}_{\mathbb{M}}, \exists \xi_{[k \vert M]},
\end{align}
where $\mu_P$ are probability measures over $(\Lambda,\Sigma)$, for all $\lambda \in \Lambda$,  $\xi_{[\cdot\vert M]}(\lambda)$ are probability distributions over the outcomes, and for any $k$, $\xi_{[k\vert M]}$ is a measurable function for $(\Lambda,\Sigma)$. Calling $\Pi$ the set of all $\mu_P$ and $\Theta$ the set of all $\xi_{[k\vert M]}$ we have that an ontological model for $B$ is a quadruple $(\Lambda, \Sigma, \Pi, \Theta)$ that recovers the conditional probabilities by means of 
\begin{equation}
    p(k\vert M_i, P_j) =  \int_\Lambda \xi_{[k\vert M_i]}(\lambda)\mathrm{d}\mu_{P_j}(\lambda) ,\forall i,j,k.
\end{equation}

The assumption of noncontextuality is defined as follows: 

\begin{dfn}[Noncontextuality]
A behavior in a prepare-and-measure scenario, $B\in \mathbb{B}$, is called  \textit{noncontextual} if there exists some ontological model $(\Sigma, \Lambda, \Pi, \Theta)$ for the behavior $B$ such that the measures from $\Pi$ respect operational equivalences of $\mathbb{E}_{\mathbb{P}}$, meaning
\begin{equation}
    \sum_j\alpha^a_jP_j \simeq \sum_j \beta^a_jP_j \Rightarrow \sum_j\alpha_j^a\mu_{P_j} = \sum_j\beta_j^a\mu_{P_j},
\end{equation}
and the same for elements from $\Theta$ that are associated with equivalent procedures from $\mathbb{E}_{\mathbb{M}}$
\begin{align}
     &\sum_{i,k}\alpha_{[k\vert M_i]}^b[k\vert M_i] \simeq \sum_{i,k}\beta_{[k\vert M_i]}^b[k\vert M_i] \Rightarrow\nonumber\\
     &\sum_{i,k}\alpha_{[k\vert M_i]}^b\xi_{[k\vert M_i]} = \sum_{i,k}\beta_{[k\vert M_i]}^b\xi_{[k\vert M_i]}.
\end{align}
\end{dfn}

The set of behaviors that do have a noncontextual ontological explanation is fully characterized by a finite set of tight inequalities forming the so-called noncontextual polytope $NC(\mathbb{B})$~\cite{schmid2018all}, see   Appendix \ref{sec:appendix2} for an example. The behaviors that are incompatible with any noncontextual ontological explanation are said to be contextual. It is already established that operational descriptions of quantum theory, where POVMs represent measurements and density matrices represent preparations, can lead to contextual behaviors (see Refs.~\cite{leifer2014quantum,Banik_2014,mazurek2016experimental})-- in particular, the simplest scenario in which such a violation can occur is given by $\mathbb{B}_{\rm si}$. 

Deciding if a behavior in a given scenario is noncontextual or not can be framed as a linear program, and is fully determined by the complete set of facet-defining noncontextuality inequalities characterizing the polytope $NC(\mathbb{B})$~\cite{schmid2018all}. Using hierarchies of semi-definite programs (SDP), it is also possible to bound the set of quantum behaviors~\cite{chaturvedi2020characterising,tavakoli2021bounding}. However, in most of the situations, these numerical tools provide little intuition for generating novel analytical insights, and they become increasingly computationally demanding. This is especially evident when attempting at finding all noncontextuality inequalities or applying SDP hierarchies to  scenarios where $|I|, |J|, |K|\gg 2$.

The resource theoretic toolbox will be instrumental to provide simple yet important qualitative understanding of possibly large prepare-and-measure scenarios, while avoiding numerically demanding procedures. We will do so by leveraging the concept of measurement simulability, and by lifting inequalities present in smaller scenarios into more complex ones.  Before delving into the key aspects of the resource theory of generalized contextuality, we will first explain measurement simulation within an operational-probabilistic theories perspective.

\subsection{Measurement simulability}
\label{sec: PreliminariesMeasurementSim}

One notion that will be valuable to this work is that of measurement simulability. It was first stated for quantum measurements~\cite{Guerini2017,Oszmaniec_2017} and recently studied in the context of generalized probabilistic theories~\cite{Filippov_SimulabilityGPT2018}. The basic idea is to understand which measurement statistics can be obtained by using a given set of measurement apparatuses and classical (pre- or post-) processing. Here we adapt the notion of measurement simulability to operational theories.

\begin{dfn}[Measurement simulability]
\label{def: MeasurementSimulability}
Consider a set of $|I|$ measurement procedures $\mathbb{N}\equiv \{N_i\}_{i\in I}$, on a given operational theory, with outcome set $K$. Then, another measurement procedure set $\{M_{\tilde{i}}\}_{\tilde{i} \in \tilde{I}}$ on this operational theory, with outcome set $\tilde{K}$, is said to be $\mathbb{N}$-simulable if there exists classical pre-processings $q_M(i|\tilde{i})$ and post-processing $q_O^i(\tilde{k}|k)$  such that 
\begin{equation}
    [\tilde{k}|M_{\tilde{i}}]\simeq\sum_{i,k}  q_O^i(\tilde{k}|k)[k|N_i]q_M(i|\tilde{i})
\end{equation}
for every $\tilde{k}\in \tilde{K}$ and $\tilde{i}\in\tilde{I}$. Above, $q_M(i|\tilde{i})$ is a conditional probability that, for each $\tilde{i}$, chooses $N_i$ with probability $q_M(i|\tilde{i})$.  Similarly, $q_O^i(\tilde{k}|k)$ define the probability of outcome $\tilde{k}\in \tilde{K}$ given $k\in K$ for each $i\in I$.
\end{dfn}

In other words, measurement simulability states \emph{an  equivalence of specific form}  between the  measurement procedure to be simulated and the set of measurements performing the simulation, in which the coefficients defining the equivalence are decomposed as $\beta^{\,{\rm sim}\tilde{M}_{\tilde{i}}}_{[k_i|N_i]}:=q_M(i|\tilde{i})q_O^i(\tilde{k}|k)$.

As said before, the definition of simulability has been studied in the framework of generalized probabilistic theories~\cite{Filippov_SimulabilityGPT2018}, which is similar to the above definition, but with measurement events replaced by effects. If one has access to a tomographically complete set of measurements, our definition naturally implies the definition for the GPT associated to the operational theory by quotienting contexts~\cite{schmid2020characterization}.
In the case of quantum theory (in tomographically complete scenarios), measurement $\{N_i\}_{i \in I}$-simulability of a measurement set $\{M\}_{\tilde{i} \in \tilde{I}}$ can be obtained by interchanging $[\tilde{k}|M_{\tilde{i}}]$ and $[k|N_i]$ with the corresponding POVM elements.

\subsection{Resource theory}
\label{Preliminaries: RT}

In general formulations of resource theories, the basic ingredients are  \textit{objects}, that may feature a specific resource, as well as operations among those objects~\cite{duarte2018resource, coecke2016mathematical}. Objects without any resource, and operations incapable to create them are called, respectively, free objects and free operations. Free operations define a pre-order: if an object $o$ can be freely transformed into $o'$, then $o$ must have at least the same amount of resources as $o'$. This pre-order, in turn, must be respected by any monotone aiming to quantify the resource.

For the present work, we consider contextuality in any fixed prepare-and-measure scenario $\mathbb{B}$ as the resource, following Ref.~\cite{duarte2018resource}. Thus, we are interested in considering the objects as $B \in \mathbb{B}$, while the set of free objects is naturally defined by the polytope $NC(\mathbb{B})$. 
The set of free operations defining the resource theory we consider is the set of pre-processing preparations or measurements, together with a post-processing of the measurement results~\cite{duarte2018resource}.

\begin{dfn}
\label{def: free operations}
Given a scenario $\mathbb{B}:= \left(|J|,|I|, |K|, \mathbb{E}_{\mathbb{P}}, \mathbb{E}_{\mathbb{M}}\right)$ we define the set of free operations $\mathcal{F}$ as the set of maps $T:\mathbb{B} \to T(\mathbb{B})$ such that
\begin{align}
    &\hspace{0.8cm}T:\left \{ p(k \vert M_i, P_j)\right\}_{k\in K, i\in I, j\in J } \mapsto\nonumber \\
    &\left\{ \sum_{i,j,k}q_O^i(\tilde{k}\vert k) p(k\vert M_i,P_j)q_M(i\vert \tilde{i})q_P(j\vert \tilde{j})\right\}_{\tilde{k}\in \tilde{K}, \tilde{i}\in \tilde{I}, \tilde{j}\in \tilde{J}}
    \label{eq: FreeOperations}
\end{align}
where $q_O^i: K \to \tilde{K}, q_M: \tilde{I}\to I, q_P: \tilde{J}\to J$ are stochastic maps between index sets, i.e. $q_P=(q_P(j\vert \tilde{j}))_{j,\tilde{j}}$ is a stochastic matrix, corresponding to operational primitives in the different scenarios defined for each $\mathbb{B}$ by $T(\mathbb{B}):=\left(|\tilde{J}|, |\tilde{I}|, |\tilde{K}|, \mathbb{E}_{T(\mathbb{P})}, \mathbb{E}_{T(\mathbb{M})}\right)$, for set's of operational equivalences defined for the procedures \textit{after} the transformation $T$ was performed. 
\end{dfn}

The free operations have an impact on the equivalence classes. For instance, the new coefficients for preparations, $\tilde{\bm{\alpha}}$ and $\tilde{\bm{\beta}}$ (for every $s$ labeling the equivalences),  are those obeying equations~\cite{duarte2018resource} 
\begin{subequations}
\label{eq: UpdateEquivalences}
\begin{equation}
    \label{eq: updateAlpha}
    \alpha_j^s = \sum_{\tilde{j}\in \tilde{J}}\tilde{\alpha}_{\tilde{j}}^s q_P(j\vert \tilde{j}),
    \end{equation}
    \begin{equation}
    \label{eq: updateBeta}
    \beta_j^s = \sum_{\tilde{j}\in \tilde{J}}\tilde{\beta}_{\tilde{j}}^s q_P(j\vert \tilde{j}),
    \end{equation}
\end{subequations}
where $q_P(j|\tilde{j})$ are defined by the free operation (with similar relations for equivalences on measurements). The change in the operational equivalences is represented by the notation $\mathbb{E}_{\mathbb{P}} \stackrel{T}{\to} \mathbb{E}_{T(\mathbb{P})}$. Some particularly important features of the new equivalences are: first, non-trivial equivalences in the new scenario can appear, even if the original scenario had trivial equivalences. This is so because  $\bm{\alpha} = \bm{\beta}$ \textit{does not} imply that $q_P\bm{\alpha}= \tilde{\bm{\alpha}} = \tilde{\bm{\beta}} = q_P\bm{\beta}$, for $q_P$ a (left) stochastic matrix. Second, no equivalences can be `broken'; indeed, from \eqref{preparation equivalences},
\begin{align*}\label{uplifted operations for preparations}
    &\sum_{j \in J} \alpha_j^sP_j\simeq \beta_j^s P_j \\
    &\implies \sum_{j,\tilde{j}}\tilde{\alpha}^s_{\tilde{j}}q_P(j\vert \tilde{j})P_j\simeq \sum_{j,\tilde{j}}\tilde{\beta}^s_{\tilde{j}}q_P(j\vert \tilde{j})P_j \\
    &\implies \sum_{{\tilde{j}} \in \tilde{J}}\tilde{\alpha}^s_{\tilde{j}}\tilde{P}_{\tilde{j}}\simeq \sum_{{\tilde{j}} \in \tilde{J}}\tilde{\beta}^s_{\tilde{j}}\tilde{P}_{\tilde{j}} 
\end{align*}
where the new set of preparations $T(\mathbb{P}) \equiv \{P_{\tilde{j}}\}_{\tilde{j} \in \tilde{J}}$ are defined in the new scenario $T(\mathbb{B})$. 

Finally, there are some different examples of monotones respecting the pre-order established by the free operations. The one we will use on this work is the $l_1-$distance from Ref.~\cite{duarte2018resource}.
\begin{dfn}
\label{def: l1-distance}
Let $\mathbb{B} := (|J|,|I|,|K|,\mathbb{E}_{\mathbb{P}},\mathbb{E}_{\mathbb{M}})$ be any finitely defined prepare-and-measure scenario. The $l_1$-contextuality distance $\mathsf{d}: \mathbb{B} \to \mathbb{R}_+$ is defined by
\begin{equation}
     \mathsf{d}(B) := \min_{B^* \in NC(\mathbb{B})}\max_{i \in I,j \in J}\sum_{k\in K} \vert p(k|M_i,P_j) - p^*(k|M_i,P_j) \vert 
\end{equation}
\end{dfn}

In Ref.~\cite{baldi2021emergence} this measure was used to bound nonclassicality in finite scenarios relevant for quantum Darwinism.

\section{Results}
\label{sec: Results}

The results here reported are essentially obtained by exploring the defining feature of free operations; namely, that they cannot increase the resource (contextuality). The practical implication is that if $T$ is a free operation and $T(B)\in T(\mathbb{B})$ is contextual, then $B$ must be contextual on the original scenario, $\mathbb{B}$.

With this in mind, we first show how to reduce some complex scenarios to simpler ones by using measurement simulability. A practical implication of this is that we can attain/explore contextuality with easier implementations; with this perspective we reinterpret the results of Ref.~\cite{mazurek2016experimental}. Secondly, we take the opposite path, showing how to build more complex scenarios from simpler ones -- where important features of the simple scenarios are carried to the complex one. This technique allows to engineer scenarios where all non-trivial facets exhibit quantum violations. Moreover, we conclude that the contextual advantage on the cloning task~\cite{lostaglio2019contextual} is inherited from a simpler scenario. 

\subsection{Simulability and free operations}
\label{sec: ResultsSimFree}

We begin by showing that measurement simulation, as expressed in Def.~\ref{def: MeasurementSimulability}, physically implements a  subset of the free operations.

\begin{lemma}[Simulation is free]
\label{lemma: SimFree}
Consider a $\{N_i\}_{i \in I}$-simulation of a set $\{M_{\tilde{i}}\}_{\tilde{i}\in \tilde{I}}$ and a set of preparations $\mathbb{P}$. Now consider the behaviors obtained by the simulating set $\{N_i\}_{i \in I}$, $B_{N}:=\{p(k|N_{i},P_{j})\}_{k\in I,i\in I,j\in J}$, and those obtained by the simulated set $\{M_{\tilde{i}}\}_{\tilde{i}\in \tilde{I}} $, $B_{M}:=\{p(\tilde{k}|M_{\tilde{i}},P_j)\}_{{\tilde{k}\in\tilde{K}},{\tilde{i}\in \tilde{I}},j\in J}$. 
The operation implemented by such a simulation, $T_{\rm sim}: B_{N}\mapsto B_{M}$, is free.
\end{lemma}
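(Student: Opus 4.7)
The plan is to show directly that $T_{\rm sim}$ fits the template of a free operation from Definition \ref{def: free operations} by identifying the right stochastic maps. By the definition of measurement simulability, each simulated event obeys
\begin{equation*}
[\tilde{k}|M_{\tilde{i}}]\simeq\sum_{i,k}q_O^i(\tilde{k}|k)\,q_M(i|\tilde{i})\,[k|N_i],
\end{equation*}
so at the level of behaviors we immediately have $p(\tilde{k}|M_{\tilde{i}},P_j)=\sum_{i,k}q_O^i(\tilde{k}|k)\,q_M(i|\tilde{i})\,p(k|N_i,P_j)$. I would then observe that this is exactly the action of a free operation in which $q_M$ and $q_O^i$ are those coming from the simulation and $q_P(j|\tilde{j}):=\delta_{j\tilde{j}}$ is the trivial stochastic map on preparations (so $\tilde{J}=J$ and the preparation set is left untouched).

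Next I would check that this map produces a consistent target scenario $T(\mathbb{B})$. Since $q_P$ is the identity, the preparation equivalences $\mathbb{E}_{\mathbb{P}}$ automatically satisfy the update rules \eqref{eq: updateAlpha}-\eqref{eq: updateBeta} with $\tilde{\pmb\alpha}^s=\pmb\alpha^s$ and $\tilde{\pmb\beta}^s=\pmb\beta^s$, hence $\mathbb{E}_{T(\mathbb{P})}=\mathbb{E}_{\mathbb{P}}$ and no preparation equivalence can be either broken or created. For the measurement side, I would invoke the analog of \eqref{uplifted operations for preparations} for measurements: any equivalence $\sum \alpha^b_{[k|N_i]}[k|N_i]\simeq\sum \beta^b_{[k|N_i]}[k|N_i]$ in the source scenario lifts to a corresponding equivalence between the simulated events via the same coefficients, and in addition, the defining simulation identities themselves are precisely equivalences of the form \eqref{eq: UpdateEquivalences} (with coefficients $\beta^{\,{\rm sim}\tilde M_{\tilde i}}_{[k|N_i]}=q_M(i|\tilde i)q_O^i(\tilde k|k)$ as highlighted in the text right after Definition \ref{def: MeasurementSimulability}), so they sit naturally inside $\mathbb{E}_{T(\mathbb{M})}$.

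The main subtlety, and the step I would spend the most care on, is verifying that no measurement equivalence is broken when passing through $T_{\rm sim}$ and that the target scenario is itself well-defined; in other words, that the object $T(\mathbb{B})=(|J|,|\tilde I|,|\tilde K|,\mathbb{E}_{\mathbb{P}},\mathbb{E}_{T(\mathbb{M})})$ produced by the simulation really is a prepare-and-measure scenario in the sense of Definition 2, with $\mathbb{E}_{T(\mathbb{M})}$ consistent with the simulation identities. This is essentially the measurement-side analogue of the derivation displayed in the paragraph around \eqref{uplifted operations for preparations}, and the argument is identical in structure: one uses linearity of the map $T_{\rm sim}$ together with the fact that $q_O^i$ and $q_M$ are stochastic to push any source-side equivalence through to a valid target-side equivalence. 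Once this is established, $T_{\rm sim}$ literally matches Definition \ref{def: free operations} and is therefore free, concluding the proof.
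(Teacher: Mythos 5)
Your proposal is correct and follows essentially the same route as the paper's proof: identify $q_O^i$ and $q_M$ from the simulation, take $q_P$ to be the identity, use linearity together with the operational equivalence $[\tilde{k}|M_{\tilde{i}}]\simeq\sum_{i,k}q_O^i(\tilde{k}|k)[k|N_i]q_M(i|\tilde{i})$ to conclude that the image behavior is exactly $B_M$, and match this against the template of Definition \ref{def: free operations}. Your additional verification that the equivalence classes update consistently is extra care the paper leaves implicit (it is absorbed into how $T(\mathbb{B})$ and $\mathbb{E}_{T(\mathbb{M})}$ are defined), but it does not change the substance of the argument.
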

\begin{proof} Measurement simulation acts as a map $T_{\rm sim}$ which takes the measurement events $[k|N_i]$, to the measurement events $\sum_{k,i}q_O^i(\tilde{k}|k)[k|N_i]q_M(i|\tilde{i})$. Due to linearity, the impact of simulation on behaviors is $T_{\rm sim}(\{p(k|N_i,P_j)\})= \{\sum_{k,i}q_O^i(\tilde{k}|k)p(k|N_i,P_j)q_M(i|\tilde{i})\}$. Now, the equivalence established by simulation, $[\tilde{k}|M_{\tilde{i}}]\simeq \sum_{k,i}q_O^i(\tilde{k}|k)[k|N_i]q_M(i|\tilde{i})$, implies
\begin{equation}
    \sum_{k,i} q_O^i(\tilde{k}|k)p(k|N_i,P_j)q_M(i|\tilde{i})=p(\tilde{k}|M_{\tilde{i}},P_j)\,\,\forall P_j \in \mathbb{P}.
     \label{eq: SimOnbehaviors}
\end{equation}
By comparing the l.h.s. of Eq.~\eqref{eq: SimOnbehaviors} with the r.h.s. of Eq. \eqref{eq: FreeOperations}, we see that $T_{\rm sim}$ is indeed a specific kind of free operation (obtained through simulation), which leaves preparations untouched.
\end{proof}

This lemma has a direct implication for quantum realizations (we will denote $\mathbb{M}^Q$ as the quantum realizations of the procedures $\mathbb{M}$):
\begin{coro}
Let $\mathbb{M}_1^Q,\mathbb{M}_2^Q$ be sets of quantum realizations of  prepare-and-measure scenarios $\mathbb{B}_1,\mathbb{B}_2$, respectively. Then, if $\mathbb{M}_1^Q$ is $\mathbb{M}_2^Q$-simulable there exists a free operation $T:\mathbb{B}_2 \to T(\mathbb{B}_2)=\mathbb{B}_1$. 
\end{coro}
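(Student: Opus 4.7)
The plan is to reduce the corollary to a direct application of Lemma \ref{lemma: SimFree}. First, I would translate the $\mathbb{M}_2^Q$-simulability hypothesis from the quantum (POVM) level to the level of measurement events required by Definition \ref{def: MeasurementSimulability}. Quantum simulability is stated as POVM-level identities of the form $E_{\tilde{k}|M^{Q}_{1,\tilde{i}}} = \sum_{i,k} q_O^i(\tilde{k}|k)\, q_M(i|\tilde{i})\, E_{k|M^{Q}_{2,i}}$; because the Born rule pairs POVM elements with the tomographically complete set of density matrices, this POVM identity immediately implies the operational equivalence $[\tilde{k}|M^{Q}_{1,\tilde{i}}] \simeq \sum_{i,k} q_O^i(\tilde{k}|k)\, q_M(i|\tilde{i})\, [k|M^{Q}_{2,i}]$, which is precisely the simulability condition of Definition \ref{def: MeasurementSimulability} in the operational theory induced by the quantum realization.

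Next, treating the preparations of $\mathbb{B}_1$ and $\mathbb{B}_2$ as a common set $\mathbb{P}$ (as is implicit in the statement, since only the measurement sides differ), I would invoke Lemma \ref{lemma: SimFree} directly with $N_i \leftrightarrow M^{Q}_{2,i}$ and $M_{\tilde{i}} \leftrightarrow M^{Q}_{1,\tilde{i}}$. The lemma then furnishes a free operation $T_{\rm sim}$ whose action on measurement events is given by the kernels $q_M, q_O^i$ extracted above, and whose action on preparations is trivial, obtained by taking $q_P$ to be the identity stochastic matrix in eqs.~\eqref{eq: updateAlpha}--\eqref{eq: updateBeta}. By construction, $T_{\rm sim}$ sends every behavior built from $\mathbb{P}$ and $\mathbb{M}_2^Q$ in $\mathbb{B}_2$ to the corresponding behavior built from $\mathbb{P}$ and $\mathbb{M}_1^Q$, i.e. to an element of $\mathbb{B}_1$.

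The last step is to verify $T(\mathbb{B}_2) = \mathbb{B}_1$ at the level of scenario data. The preparation equivalences $\mathbb{E}_{\mathbb{P}}$ are preserved by~\eqref{eq: updateAlpha}--\eqref{eq: updateBeta} with $q_P = \mathrm{id}$, and the measurement equivalences of $\mathbb{B}_2$ are transported by the kernels $q_M, q_O^i$ to equivalences of $T(\mathbb{B}_2)$ via the same uplift argument the text already uses for preparations. The main obstacle I anticipate is precisely this matching of the scenario tuples: one must either postulate that $\mathbb{E}_{\mathbb{M}_1}$ is exactly the image of $\mathbb{E}_{\mathbb{M}_2}$ under the simulation, or else be content with the weaker statement that $T_{\rm sim}$ lands in a scenario whose noncontextual polytope agrees with $NC(\mathbb{B}_1)$ on the simulated coordinates. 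Either reading suffices for the intended use of the corollary, since the key takeaway—that contextuality in $\mathbb{B}_1$ witnesses contextuality in $\mathbb{B}_2$—follows immediately from $T_{\rm sim}$ being a free operation.
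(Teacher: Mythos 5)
Your proposal is correct and follows essentially the same route as the paper, which states this corollary as a direct consequence of Lemma \ref{lemma: SimFree} after noting that quantum simulability at the POVM level yields the operational equivalences of Definition \ref{def: MeasurementSimulability} via the Born rule. Your explicit attention to whether $\mathbb{E}_{\mathbb{M}_1}$ coincides with the image of $\mathbb{E}_{\mathbb{M}_2}$ under the simulation is a welcome refinement of a point the paper leaves implicit, but it does not change the argument.
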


One implication of the above results is that
one can use simulations to derive simpler scenarios from complex ones. We discuss two simple instances of how this can be done: by manipulating measurement equivalences or by moving to a scenario having fewer measurements than the original one.

The first instance is a consequence of the impact of free operations on equivalence classes (as expressed in Eqs.~\eqref{eq: UpdateEquivalences} for preparations). Indeed, by performing classical pre- and post-selection of events one might be able to engineer the equivalence classes of interest. This gives an alternative interpretation of the results of Ref.~\cite{mazurek2016experimental}, that we proceed to briefly recall: In their work, the authors tackle the problem of practical impossibility to obey exactly the desired operational equivalences for the ideal quantum procedures, due to experimental errors. Assume that we want to test a noncontextuality inequality defined for a scenario $\mathbb{B}$. When operationally characterizing the procedures of $\mathbb{B}$ in a real experiment the noisy data effectively implements some other closely related scenario $\mathbb{B}^{\rm p}$. We call these the \textit{primary procedures}. For concreteness, let us use preparations $\mathbb{P}^{\rm p}$ to express in precise terms the idea. The procedures $\mathbb{P}^{\rm p}$ correspond to those that can be characterized using the (robust) experimental implementations. In particular, the problem of those is that they do not satisfy the ideal operational equivalences of the target scenario $\mathbb{B}$, with preparation procedures $\mathbb{P}$, in which case the noncontextuality inequality tested is not applicable.

By performing classical post-processing in the procedures, it is possible to obtain new \textit{secondary} procedures which match the expected operational equivalences perfectly, \emph{by construction}. The mapping can be framed as something of the form 
\begin{equation}\label{eq: secondary to primary Ps}
    P^{\rm s}_{\tilde{j}} = \sum_j q_P(\tilde{j}|j) P_j^{\rm p}
\end{equation}
for all $j \in J$ labeling the elements of $\mathbb{P}^{\rm p}$. Properly choosing $p(\tilde{j}|j)$ allows the procedures $\mathbb{P}^{\rm s} := \{P^{\rm s}_{\tilde j}\}_{\tilde j}$ to satisfy the target operational equivalences of $\mathbb{B}$. With this, the behavior obtained from the secondary procedures can now be properly used to violate the inequality, that is now applicable.

To this approach one could provide the following criticism: Since we never obtain a noncontextual bound with respect to the primary (measured) procedures and its corresponding operational equivalences, what guarantees that we are not demonstrating contextuality of the secondary procedures only? The resource theory framework guarantees that:
\begin{theorem}
Contextuality of the behaviors obtained with the secondary procedures imply contextuality of the behaviors obtained from the primary procedures.
\end{theorem}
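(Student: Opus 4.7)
The plan is to identify the map from primary to secondary procedures in Ref.~\cite{mazurek2016experimental} as a free operation of the resource theory, and then invoke the defining monotonicity property of the free operations together with the contrapositive. Concretely, the secondary preparations are engineered as convex mixtures of the primary ones, $\tilde{P}_{\tilde{j}} = \sum_{j} q_P(j|\tilde{j}) P_j$, chosen such that the desired operational equivalences hold exactly (and, if applicable, analogously for measurements via some post-processing channel $q_O^i(\tilde{k}|k)$ and a measurement-choice channel $q_M(i|\tilde{i})$).

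First, I would exhibit an explicit map $T_{\rm sec}: \mathbb{B}_{\rm prim} \to \mathbb{B}_{\rm sec}$ built from these very stochastic maps via Definition~\ref{def: free operations}. This is, by construction, a free operation; moreover, by linearity one has $T_{\rm sec}(B_{\rm prim}) = B_{\rm sec}$, where $B_{\rm prim}$ is the behavior measured with the primary procedures and $B_{\rm sec}$ is the behavior of the post-processed (secondary) procedures. A routine check using eqs.~\eqref{eq: UpdateEquivalences} then confirms that the equivalence classes $\mathbb{E}_{T(\mathbb{P})}$ (and similarly for measurements) produced by applying $T_{\rm sec}$ are precisely the target equivalences the authors engineer by construction, so that the image scenario $T_{\rm sec}(\mathbb{B}_{\rm prim})$ coincides with $\mathbb{B}_{\rm sec}$.

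Second, I would invoke the defining monotonicity property of the free operations: any $T \in \mathcal{F}$ maps $NC(\mathbb{B}_{\rm prim})$ into $NC(T(\mathbb{B}_{\rm prim})) = NC(\mathbb{B}_{\rm sec})$. Taking the contrapositive immediately yields: if $B_{\rm sec} \notin NC(\mathbb{B}_{\rm sec})$, that is, the secondary behavior is contextual, then $B_{\rm prim} \notin NC(\mathbb{B}_{\rm prim})$, i.e.\ the primary behavior is contextual as well, which is the desired implication.

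The main obstacle is essentially bookkeeping: carefully matching the particular post-processing scheme used in Ref.~\cite{mazurek2016experimental} to the stochastic maps $q_P, q_M, q_O^i$ of Definition~\ref{def: free operations}, and checking that the noncontextuality inequalities to which the authors compare $B_{\rm sec}$ are indeed those defining $NC(\mathbb{B}_{\rm sec})$ rather than $NC(\mathbb{B}_{\rm prim})$. Once these identifications are made, the argument reduces to the scenario-independent fact, already embedded in the resource-theoretic formalism, that free operations cannot create contextuality.
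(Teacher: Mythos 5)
Your proposal is correct and follows essentially the same route as the paper: identify the primary-to-secondary post-processing as a free operation of Definition~\ref{def: free operations} and then conclude by the resource-nonincreasing property. The only cosmetic difference is that the paper phrases the final step through the $l_1$-distance monotone $\mathsf{d}$ (using $\mathsf{d}(T(B))\leq \mathsf{d}(B)$ and $\mathsf{d}(B)>0$ iff $B$ is contextual), whereas you invoke directly that free operations map $NC(\mathbb{B}_{\rm prim})$ into $NC(\mathbb{B}_{\rm sec})$ --- the same underlying fact.
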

\begin{proof}
Recall that for the monotone $\mathsf{d}$ it is true that $\mathsf{d}(T(B)) \leq \mathsf{d}(B)$ for all $B \in \mathbb{B}$ and $T\in \mathcal{F}$ free operation. Transformations from primary to secondary procedures are of the form given by Def.~\ref{def: free operations}. Let us denote this operations as $T_{\text{p} \to \text{s}}$. This can be seen simply by noticing that $T_{\text{p} \to \text{s}}$   probabilistic mixes the secondary procedures given the primary ones, as is expressed by Eq.~\eqref{eq: secondary to primary Ps}. Since $\forall T \in \mathcal{F}$, it is true that $\mathsf{d}(T(B)) > 0 \implies \mathsf{d}(B) > 0$,  the fact that $B_{\text{s}} = T_{\text{p} \to \text{s}} (B_{\text{p}})$ implies $\mathsf{d}(B_{\text{s}}) > 0 \implies \mathsf{d}(B_{\text{p}}) > 0 \implies B_{\text{re}}$ is contextual. 
\end{proof}
With the resource-theoretic perspective here proposed, we can understand the methods of Ref.~\cite{mazurek2016experimental} as using a free operation to obtain new behaviors which obey the desired operational equivalences and still exhibit contextuality. Moreover, since the performed operation is free, we can add that their violations \emph{also show contextuality for the original measurements, in the original scenario.} Notice that imposing assumptions on the possible experimental errors this argument can be extended to the ideal quantum realizations. 

Let us now consider the use of measurement simulation to reduce a given scenario. We will consider the simplest case, where part of the measurements are erased. The following results are corollaries of Lemma~\ref{lemma: SimFree}.
\begin{coro}[Trivial simulation] \label{corollary: trivialSim}
Consider a scenario $\mathbb{B}=(|J|,|I|,|K|,\mathbb{E_P},\mathbb{E_M})$ and define another scenario $\mathbb{B'}$ obtained simply by discarding some of the measurements, i.e. $\mathbb{B'}=(|J|,|I'|,|K|,\mathbb{E_P},\mathbb{E'_{M'}})$  where $\mathbb{M}'\subset\mathbb{M}$ (thus, $|I'|\leq |I|)$ and $\mathbb{E'_{M'}}\subset\mathbb{E_M}$. The transformation of erasing such procedures and equivalences among them, $T: \mathbb{B}\mapsto \mathbb{B'}$, is free.
\begin{proof}
This transformation can be mathematically described as $T(B)=\{\sum_{j} p(k|N_i,P_j)q_M(i|i')\}$ where $q(i|i')=1$ if $i=i'$ and $0$ otherwise. 
\end{proof}
\end{coro}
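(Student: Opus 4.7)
The plan is to derive this corollary as a direct application of Lemma \ref{lemma: SimFree}, by exhibiting the erasure transformation as the free operation associated with a particularly simple (trivial) measurement simulation. The key observation is that when $\mathbb{M}' \subset \mathbb{M}$, the retained measurements in $\mathbb{M}'$ can be simulated by the full set $\mathbb{M}$ through the trivial choice that picks out the corresponding measurement with probability one and applies an identity post-processing on outcomes.

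Concretely, I would first specify the stochastic maps witnessing the $\mathbb{M}$-simulability of $\mathbb{M}'$: take the pre-processing $q_M(i|i') = \delta_{i,i'}$ for $i'\in I' \subset I$, and the post-processing $q_O^{i}(\tilde{k}|k) = \delta_{\tilde{k},k}$. Plugging into Definition \ref{def: MeasurementSimulability}, the simulation condition reduces to $[\tilde{k}|M_{i'}] \simeq [\tilde{k}|N_{i'}]$, which holds trivially since the preparations are left untouched and $M_{i'}$ just coincides with $N_{i'}$. By Lemma \ref{lemma: SimFree}, the induced map on behaviors, which is exactly $T(B) = \{\sum_{i} p(k|N_i,P_j)\,q_M(i|i')\} = \{p(k|N_{i'},P_j)\}$, is free.

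Next, I would address the change in equivalence sets. On the preparation side nothing happens, so $\mathbb{E}_{\mathbb{P}}$ is preserved. For measurements, equations \eqref{eq: UpdateEquivalences} (adapted to measurements) show that any equivalence $\pmb{\gamma}_{\mathbb{M}}^{b}$ whose support lies entirely within $\mathbb{M}'$ is preserved by the trivial $q_M$, while equivalences involving discarded measurements simply drop out because the corresponding $q_M$-mass is zero. Thus the post-transformation equivalence set is precisely $\mathbb{E}'_{\mathbb{M}'} \subset \mathbb{E}_{\mathbb{M}}$ as required, consistent with the observation in the main text that free operations cannot break equivalences.

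There is essentially no hard step here; the content of the corollary is bookkeeping around Lemma \ref{lemma: SimFree}. The only point requiring a bit of care is the statement about equivalences: one must check that discarding measurement equivalences that involve erased measurements is compatible with the general rule that free operations only \emph{add} equivalences, never break them. This is fine because the discarded equivalences concern measurement events that no longer exist in $\mathbb{B}'$, so they are not meaningful constraints in the target scenario; formally, they are implemented by the trivial pre-processing sending all such labels to zero mass.
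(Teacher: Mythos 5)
Your proposal is correct and takes essentially the same route as the paper: the paper's one-line proof is exactly the deterministic pre-processing $q_M(i|i')=\delta_{i,i'}$ (with an implicit identity post-processing) viewed as an instance of Lemma \ref{lemma: SimFree}, and your formula $T(B)=\{\sum_{i} p(k|N_i,P_j)q_M(i|i')\}$ even corrects the paper's typographical slip of summing over $j$ instead of $i$. Your additional bookkeeping on how $\mathbb{E}_{\mathbb{M}}$ collapses to $\mathbb{E}'_{\mathbb{M}'}$ is a welcome elaboration the paper omits, but it does not change the argument.
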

This will be of importance to us, specially in the case where all remaining measurements are dichotomic and with no equivalences, $\mathbb{E_M'}=\emptyset$. In other words, the case where one arises at simple generalizations of the simplest scenario \ref{def: Simplest scenario} after discarding a subset of measurements. In this case, we know that there is a quantum realization of the measurements of the reduced scenario. That is, 

\begin{coro}
\label{corollary: trivialSimQ}
Let $\mathbb{M}:= \{M_i\}$ be any set of two-outcome operational measurements having a quantum realization $\mathbb{M}^Q$. Then, the quantum measurements  $\mathbb{M}^Q_{\rm si} := \left \{\frac{1}{\sqrt{2}}(\sigma_X + \sigma_Z), \frac{1}{\sqrt{2}}(\sigma_X - \sigma_Z)\right \} $ are $\mathbb{M}^Q$-simulable, for at least some quantum realization $\mathbb{M}^Q$ of $\mathbb{M}$ (those of the form $\mathbb{M}^Q=\mathbb{M}^Q_{\rm si}\bigcup\mathbb{N}^Q$ for some other set of quantum realizations $\mathbb{N}^Q$). 
\end{coro}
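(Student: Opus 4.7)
The plan is to observe that the statement is essentially tautological: it asserts that a set of measurements can be trivially simulated by any set containing it as a subset, so the proof reduces to constructing the trivial pre-processing that projects onto the relevant sub-collection and applying the preceding Corollary \ref{corollary: trivialSim}.

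First, I would fix the hypothesis in the explicit form the statement provides, $\mathbb{M}^Q = \mathbb{M}^Q_{\rm si} \cup \mathbb{N}^Q$, and index things so that the simulated set $\{M_{\tilde{i}}\}_{\tilde{i}\in\tilde{I}}=\mathbb{M}^Q_{\rm si}$ has $|\tilde{I}|=2$, while the simulating set $\{N_i\}_{i\in I}=\mathbb{M}^Q$ contains these two measurements at indices $i=\phi(\tilde{i})$ for some injection $\phi:\tilde{I}\hookrightarrow I$. Since both sets share the same two-outcome structure, I take $\tilde{K}=K=\{0,1\}$.

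Next, I define the classical pre- and post-processings required by Definition \ref{def: MeasurementSimulability}: set $q_M(i|\tilde{i}) := \delta_{i,\phi(\tilde{i})}$ and $q_O^{i}(\tilde{k}|k) := \delta_{\tilde{k},k}$. Plugging these into the simulation condition yields
\begin{equation}
    \sum_{i,k} q_O^{i}(\tilde{k}|k)\,[k|N_i]\,q_M(i|\tilde{i}) = [\tilde{k}|N_{\phi(\tilde{i})}] = [\tilde{k}|M_{\tilde{i}}],
\end{equation}
so the equivalence $[\tilde{k}|M_{\tilde{i}}] \simeq \sum_{i,k} q_O^{i}(\tilde{k}|k)\,[k|N_i]\,q_M(i|\tilde{i})$ holds trivially (indeed with equality of events), establishing $\mathbb{M}^Q$-simulability of $\mathbb{M}^Q_{\rm si}$.

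I do not expect any real obstacle here: the content is just that Corollary \ref{corollary: trivialSim} (trivial simulation via discarding measurements) specializes immediately to this situation, with the stochastic map $q_M$ playing the role of selecting the $\mathbb{M}^Q_{\rm si}$ subset and erasing everything in $\mathbb{N}^Q$. The only thing worth stressing explicitly is that the hypothesis $\mathbb{M}^Q=\mathbb{M}^Q_{\rm si}\cup\mathbb{N}^Q$ is precisely what makes the map $\phi$ well defined; without such an assumption an arbitrary quantum realization $\mathbb{M}^Q$ of an abstract operational set $\mathbb{M}$ need not contain the two specific Pauli combinations as literal POVM elements, which is why the statement is restricted to realizations of that particular form.
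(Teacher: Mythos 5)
Your proposal is correct and follows essentially the same route as the paper: the paper's explicit construction (given inside Example~\ref{example: BtowardsSi_M}) likewise takes a realization of the form $\mathbb{M}^Q=\mathbb{M}^Q_{\rm si}\cup\mathbb{N}^Q$ and uses the delta-function processings $q_O^i(\tilde{k}|k)=\delta_{k,\tilde{k}}$ and a $q_M$ selecting the two measurements of $\mathbb{M}^Q_{\rm si}$, exactly as you do via Corollary~\ref{corollary: trivialSim}. If anything, your $q_M(i|\tilde{i})=\delta_{i,\phi(\tilde{i})}$ is stated more carefully than the paper's version of the same map, and your closing remark about why the restriction to realizations containing $\mathbb{M}^Q_{\rm si}$ is needed is a point the paper leaves implicit.
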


With the above results, we see that using simulations provided by particular set of measurements may lead us to new, simpler, scenarios. If contextuality is witnessed in such scenarios, the resource theoretical perspective allows to conclude that contextuality was present prior to the simplification process. We will use such ideas in section~\ref{sec: ResultsQuantumC} to engineer and witness quantum contextuality on more involved scenarios.

\subsection{Composition of scenarios}
\label{sec: ResultsComposition}

In the previous section, we discussed how to use free operations (measurement simulability in particular) to obtain simpler scenarios. Here we take the opposite path, constructing complex scenarios from simpler ones. This particular construction allows to obtain important information regarding the resource, which is inherited from the original, smaller, scenarios. This is based on the following definition,
{
\begin{dfn}
\label{def: box scenario}
Let $\mathbb{B}_1=(|J_1|,|I_1|,|K_1|,\mathbb{E}_{\mathbb{P}_1},\mathbb{E}_{\mathbb{M}_1})$, and $\mathbb{B}_2=(|J_2|,|I_2|,|K_2|,\mathbb{E}_{\mathbb{P}_2},\mathbb{E}_{\mathbb{M}_2})$ be two finitely defined prepare-and-measure scenarios, with behaviors $B_1\in \mathbb{B}_1,B_2 \in \mathbb{B}_2$, seen as vectors $B_1 = (p(k_1|M_{i_1},P_{j_1}))_{i_1,k_1,j_1}$, we then define:
\begin{enumerate}
\item[(a)] (Composition of scenarios) The target scenario $\mathbb{B} \equiv \mathbb{B}_1 \boxplus \mathbb{B}_2$ defined by the tuple, $$(\vert J_1\cup J_2\vert,\vert I_1\cup I_2\vert,\vert K_1\cup K_2\vert,\mathbb{E}_{\mathbb{P}_1\cup\mathbb{P}_2},\mathbb{E}_{\mathbb{M}_1\cup\mathbb{M}_2})$$ has the operational equivalences of both scenarios defined as, for $\{a\}_{a=1}^{|\mathbb{E}_{\mathbb{P}_1 \cup \mathbb{P}_2}|} := \{a_1\}_{a_1=1}^{|\mathbb{E}_{\mathbb{P}_1}|} \cup \{a_2\}_{a_2=1}^{|\mathbb{E}_{\mathbb{P}_2}|}$, that we denote $\bm{\gamma}_{\mathbb{P}_1\cup\mathbb{P}_2}^a \in \mathbb{E}_{\mathbb{P}_1\cup\mathbb{P}_2}$,
\begin{equation}\label{op eq box scenario}
    \bm{\gamma}_{\mathbb{P}_1\cup\mathbb{P}_2}^a := \left\{\begin{matrix}
    (\bm{\alpha}^{a_1},\bm{0};\bm{\beta}^{a_1},\bm{0}), & a = a_1 \\
    (\bm{0},\bm{\alpha}^{a_2},\bm{0};\bm{\beta}^{a_2}), & a= a_2
    \end{matrix}\right.
\end{equation}
The  analogous definition holds for the operational equivalences for measurement events by a change $P \to M$ and $a \to b$.
\item[(b)] (Composition of behaviors) The binary operation $\boxplus$ is defined as the vertical stacking of vectors from the scenarios $\mathbb{B}_1,\mathbb{B}_2$ towards $\mathbb{B}$, i.e., 
\begin{equation}\label{def: box product}
    B_1 \boxplus B_2 := \left(
    \begin{matrix} p(k_1|M_{i_1},P_{j_1}) \\ p(k_2|M_{i_2},P_{j_2})
    \end{matrix}\right)
\end{equation}
With $i_1 \in I_1$, $\vert I_1\vert = \vert \mathbb{M}_1\vert$ and similarly for all other labels.
\end{enumerate}

\end{dfn}
}
As an operational constraint, the target scenario \textit{does not}  consider the probabilities obtained with hybrid procedures, i.e. those of the form 
 \begin{equation}
     p(k_1|M_{i_1},P_{j_2}),p(k_2|M_{i_1},P_{j_1}) \notin B_1 \boxplus B_2.
 \end{equation}
Note that compositions of multiple scenarios is constructed in sequence and is associative. 

This binary operation essentially appends two given scenarios. The geometrical consequences of such composition will be important, as it will allow us to build an intuition of the resulting noncontextual polytope. This is described by the following lemma:

\begin{lemma}[Geometrical consequences, from Refs.~\cite{brondsted2012introduction,paffenholz2006new,eu2020mestrado}]
\label{lemma: product convex polytopes}
Let $P \subset \mathbb{R}^n$, $Q \subset \mathbb{R}^m$ be two convex polytopes. Then, the product defined by 
\begin{equation}
    P \times Q := \left\{ \left(\begin{matrix}
    p \\ q
    \end{matrix}\right): p \in P, q \in Q\right\} \subset \mathbb{R}^{n+m},
\end{equation}
is again a convex polytope. Let $\vert V(P) \vert $ and $\vert V(Q) \vert$ represent the number of vertices of each of the convex polytopes $P$ and $Q$, then, we also have that $\vert V(P\times Q)\vert = \vert V(P) \vert \cdot \vert V(Q) \vert$. Let $\vert F(P)\vert $ define the number of facets of the convex polytope $P$, and similarly for the convex polytope $Q$. Then, we have that $\vert F(P \times Q) \vert = \vert F(P)\vert  + \vert F(Q)\vert $.  
\end{lemma}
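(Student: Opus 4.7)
The plan is to establish each of the three claims in turn, using the standard H-representation/V-representation duality for convex polytopes; the result is classical, so the task is essentially to organize the argument cleanly rather than to invent anything new.

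First, for the claim that $P \times Q$ is itself a convex polytope, I would work from the H-representation. If $P = \{x \in \mathbb{R}^n : Ax \leq b\}$ and $Q = \{y \in \mathbb{R}^m : Cy \leq d\}$ are bounded intersections of finitely many half-spaces, then
\begin{equation}
P \times Q = \left\{ \begin{pmatrix} x \\ y \end{pmatrix} \in \mathbb{R}^{n+m} : \begin{pmatrix} A & 0 \\ 0 & C \end{pmatrix}\begin{pmatrix} x \\ y \end{pmatrix} \leq \begin{pmatrix} b \\ d \end{pmatrix} \right\},
\end{equation}
which is again a bounded polyhedron and hence a convex polytope. Boundedness in the product follows from boundedness of each factor.

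Second, for the vertex count, I would use the characterization of vertices as extreme points. A point $(x,y) \in P \times Q$ is extreme if and only if it admits no nontrivial convex decomposition in $P \times Q$. Suppose $(x,y) = \lambda (x_1,y_1) + (1-\lambda)(x_2,y_2)$ with $\lambda \in (0,1)$ and the two points distinct in $P \times Q$. Then either $x \neq x_1$ or $y \neq y_1$; projecting onto each coordinate shows $(x,y)$ is extreme iff both $x$ is extreme in $P$ and $y$ is extreme in $Q$. Hence vertices of $P \times Q$ are exactly pairs of vertices, giving $\lvert V(P \times Q)\rvert = \lvert V(P)\rvert \cdot \lvert V(Q)\rvert$.

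Third, for the facet count, I would use that $\dim(P \times Q) = \dim P + \dim Q$ and that the facets of a polytope presented as $\{z : Mz \leq c\}$ correspond to the irredundant inequalities. From the block H-representation displayed above, every facet-defining inequality of $P \times Q$ arises either from a facet-defining inequality of $P$ (giving a face of the form $F \times Q$, of dimension $(\dim P - 1) + \dim Q$) or from a facet-defining inequality of $Q$ (giving $P \times G$). Each such face has codimension one in $P \times Q$, so it is indeed a facet; conversely, no mixed inequality can be facet-defining because any affine function on $P \times Q$ that attains its maximum on a codimension-one face must already be constant in one of the two coordinate blocks. Finally, faces of the form $F \times Q$ and $P \times G$ are plainly distinct, yielding $\lvert F(P \times Q)\rvert = \lvert F(P)\rvert + \lvert F(Q)\rvert$.

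The main obstacle is the last step: the cleanest argument that no ``mixed'' facet exists requires either invoking the block structure of the H-representation with a minimality argument, or equivalently using that the normal fan of a product polytope is the product of the normal fans, so that rays of the product fan (the facet normals) are exactly the disjoint union of the rays of the two factor fans. Since the authors cite \cite{brondsted2012introduction,paffenholz2006new,eu2020mestrado}, I would simply reference those works for the normal-fan product identity rather than reprove it here.
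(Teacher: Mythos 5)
Your proof is correct. Note, however, that the paper does not actually prove this lemma: it is stated as a known result imported from the cited references, and the only gesture toward an argument appears in the closing remark of the appendix on vertex permutations, where the authors observe that the facet-sum property ``is clear'' from the block H-representation of the product --- exactly the representation you write down. So your proposal fills in a proof the paper omits, and it does so along the lines the paper implicitly has in mind: block H-representation for polytopality and for the facet count, extreme-point factorization for the vertex count. The one place where your write-up adds genuine content beyond the paper's hand-wave is the exclusion of ``mixed'' facets; your dimension-counting argument (a supporting functional $a\cdot x + c\cdot y$ exposes the face $F_a\times G_c$ of dimension $\dim F_a + \dim G_c$, which has codimension one only if the functional is constant on one block) is the right way to close that gap, and is cleaner than deferring to the normal-fan identity. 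Minor quibble: in the vertex step, the sentence beginning ``Then either $x\neq x_1$ or $y\neq y_1$'' is slightly garbled --- the clean statement is that non-extremality of $x$ (or of $y$) immediately yields a nontrivial decomposition of $(x,y)$, and conversely extremality of both forces $x_1=x_2=x$ and $y_1=y_2=y$ in any convex decomposition --- but the conclusion you draw is the correct one.
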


One important feature of the binary operation is the following result.
\begin{theorem}
\label{theo:block preserve resource}
The binary operation $\boxplus$ preserves the resource: 
\begin{equation*}
    B_1\in NC(\mathbb{B}_1),B_2\in NC(\mathbb{B}_2) \Leftrightarrow B_1\boxplus B_2 \in NC(\mathbb{B}_1 \boxplus \mathbb{B}_2).
\end{equation*}
\end{theorem}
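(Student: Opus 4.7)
The plan is to establish each direction of the biconditional separately, with the key observation being the block-diagonal structure of the composed operational equivalences expressed in eq.~\eqref{op eq box scenario}.

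For the forward implication, I would explicitly build a noncontextual ontological model for $B_1 \boxplus B_2$ out of noncontextual models $(\Lambda_s,\Sigma_s,\Pi_s,\Theta_s)$ for each $B_s$, $s\in\{1,2\}$. Take the disjoint union $\Lambda:=\Lambda_1\sqcup\Lambda_2$ with the natural $\sigma$-algebra. For each preparation $P_{j_s}$ of scenario $s$, let $\mu_{P_{j_s}}$ agree with the original measure on $\Lambda_s$ and assign zero mass to $\Lambda_{3-s}$. For each measurement event $[k_s\vert M_{i_s}]$ of scenario $s$, let $\xi_{[k_s\vert M_{i_s}]}$ coincide with the original response function on $\Lambda_s$ and take any measurable extension (say, identically zero) on $\Lambda_{3-s}$; since scenario-$s$ preparations put no mass on $\Lambda_{3-s}$, this choice is irrelevant for the statistics. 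The empirical probabilities of $B_1\boxplus B_2$ are then recovered. That noncontextuality is preserved hinges on eq.~\eqref{op eq box scenario}: each composed equivalence has zero coefficients on the procedures of the other subscenario, so the required ontic equality reduces to an equality supported entirely on one $\Lambda_s$, which holds by hypothesis.

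For the converse, the cleanest route exploits the resource-theoretic perspective. The map sending $\mathbb{B}_1\boxplus \mathbb{B}_2$ to $\mathbb{B}_1$ by discarding all preparations and measurements of $\mathbb{B}_2$ is a free operation in the sense of Definition~\ref{def: free operations} (take $q_M$ and $q_P$ to be the deterministic embeddings of $I_1$ and $J_1$ into $I_1\cup I_2$ and $J_1\cup J_2$, and $q_O^i$ to be the identity), generalizing Corollary~\ref{corollary: trivialSim} from measurements to preparations. Since free operations cannot create contextuality, the image $B_1$ is noncontextual whenever $B_1\boxplus B_2$ is; symmetrically for $B_2$. This transfer works because the subset of composed equivalences indexed by $a=a_1$ in eq.~\eqref{op eq box scenario} literally becomes $\mathbb{E}_{\mathbb{P}_1}$ after deleting the zero blocks, so $\mathbb{E}_{\mathbb{P}_1}$ is respected in the restricted scenario.

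The main subtlety I expect is in the forward direction: one must verify that \emph{every} equivalence in $\mathbb{E}_{\mathbb{P}_1\cup\mathbb{P}_2}$ and $\mathbb{E}_{\mathbb{M}_1\cup\mathbb{M}_2}$ is respected, which requires knowing that the arbitrary extension of response functions across scenarios does not introduce spurious contextual mismatches. The block-diagonal coefficient vectors in eq.~\eqref{op eq box scenario} are exactly what forces each composed equivalence to live within a single $\Lambda_s$, trivializing this concern; once this is spelled out carefully the rest of the argument is routine bookkeeping.
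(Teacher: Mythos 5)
Your proof is correct and the forward direction is essentially identical to the paper's: both build a model for $B_1\boxplus B_2$ on the disjoint union $\Lambda_1\sqcup\Lambda_2$, support each $\mu_{P_{j_s}}$ on its own $\Lambda_s$, and use the block structure of eq.~\eqref{op eq box scenario} to check every composed equivalence. The converse differs in packaging: the paper restricts the given ontological model directly (cutting the zero blocks from the weight vectors and keeping the same $\Lambda$, $\mu$'s and $\xi$'s), whereas you invoke the resource-theoretic machinery by exhibiting the discarding map as a free operation in the sense of Definition~\ref{def: free operations}; the two arguments have the same content, but yours reuses established structure at the cost of having to verify that the induced equivalences of the image scenario really contain $\mathbb{E}_{\mathbb{P}_1}$ and $\mathbb{E}_{\mathbb{M}_1}$, which you do. One small technical slip: extending the scenario-$s$ response functions by \emph{identically zero} on $\Lambda_{3-s}$ violates the normalization $\sum_k\xi_{[k|M]}(\lambda)=1$ required of response functions at every ontic state; moreover, the ontic-level measurement equivalences must hold pointwise on \emph{all} of $\Lambda$, including $\Lambda_{3-s}$, so the extension is not wholly arbitrary. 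A clean fix is to extend each $\xi_{[k_s|M_{i_s}]}$ on $\Lambda_{3-s}$ by its value at some fixed $\lambda_s^*\in\Lambda_s$, which preserves both normalization and the equivalences; with that repair your argument goes through.
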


We prove this theorem in Appendix~\ref{appenidx: blocks}, but we provide some intuition here. $(\Rightarrow)$ If  there exists a noncontextual ontological model for each part $B_1$ and $B_2$, respecting the operational equivalences of both scenarios, then in the new scenario $\mathbb{B}:=\mathbb{B}_1\boxplus\mathbb{B}_2$ the operational equivalences are inherited according to \eqref{op eq box scenario}, so that we can choose the set of ontic states $\Lambda_1 \sqcup \Lambda_2$, and construct a noncontextual model for any $B$ using the model of the parts, over this larger ontic space. $(\Leftarrow)$ Now, on the other way around, if there exists a noncontextual ontological model for any behavior $B_1 \boxplus B_2$ there must exist one for its parts, by simply restricting the probability distributions to the correct labels since the operational equivalences are of the form of equation \eqref{op eq box scenario}.
An implication of $\boxplus$ preserving the resource is that it does not increase the $l_1$-distance quantifier:
\begin{lemma}
\label{lemma:Compositionl1-distance}
Let $\mathbb{B} := (|J|,|I|,|K|,\mathbb{E}_{\mathbb{P}},\mathbb{E}_{\mathbb{M}})$ be any finitely defined prepare-and-measure scenario. The $l_1$-contextuality distance $\mathsf{d}: \mathbb{B} \to \mathbb{R}_+$ (see Def.~\ref{def: l1-distance})
is subbaditive under the binary operation $\boxplus$. This means that for $B_1 \in \mathbb{B}_1$ and $B_2 \in \mathbb{B}_2$ we get:
\begin{equation}
     \mathsf{d}(B_1 \boxplus B_2) \leq \mathsf{d}(B_1)+ \mathsf{d}(B_2).
\end{equation}
\end{lemma}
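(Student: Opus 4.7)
The plan is to prove subadditivity by constructing a natural candidate noncontextual approximation to $B_1 \boxplus B_2$ out of optimal noncontextual approximations to each factor, and then exploiting the block structure of both the behavior and the $l_1$ objective.

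First I would let $B_1^* \in NC(\mathbb{B}_1)$ and $B_2^* \in NC(\mathbb{B}_2)$ be noncontextual behaviors achieving the minima in $\mathsf{d}(B_1)$ and $\mathsf{d}(B_2)$ respectively (these minima are attained because each $NC(\mathbb{B}_s)$ is a compact polytope and the $l_1$ objective is continuous). Next I would invoke Theorem~\ref{theo:block preserve resource} to conclude that $B_1^* \boxplus B_2^* \in NC(\mathbb{B}_1 \boxplus \mathbb{B}_2)$, so this stacked point is a valid feasible candidate in the minimization defining $\mathsf{d}(B_1 \boxplus B_2)$. Hence
\begin{equation*}
\mathsf{d}(B_1 \boxplus B_2) \;\leq\; \max_{(i,j)}\sum_{k}\bigl|(B_1 \boxplus B_2)(k|M_i,P_j) - (B_1^* \boxplus B_2^*)(k|M_i,P_j)\bigr|,
\end{equation*}
where the max ranges over the index set of the composite scenario.

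The key structural step is then to note that, by Definition~\ref{def: box scenario}, the composite scenario contains no hybrid probabilities, so its index set for $(i,j)$ partitions cleanly as the disjoint union of the index sets of $\mathbb{B}_1$ and $\mathbb{B}_2$. The componentwise definition of $\boxplus$ in \eqref{def: box product} implies that on indices from $\mathbb{B}_s$ the stacked behaviors reduce to $B_s$ and $B_s^*$ respectively. Therefore the outer max splits as
\begin{equation*}
\max_{(i,j)}\sum_{k}\bigl|\cdots\bigr| \;=\; \max\Bigl(\max_{(i_1,j_1)}\sum_{k_1}|B_1 - B_1^*|,\; \max_{(i_2,j_2)}\sum_{k_2}|B_2 - B_2^*|\Bigr) \;=\; \max\bigl(\mathsf{d}(B_1),\mathsf{d}(B_2)\bigr),
\end{equation*}
by the optimality of $B_1^*$ and $B_2^*$. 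Since each $\mathsf{d}(B_s)\geq 0$, we have $\max(\mathsf{d}(B_1),\mathsf{d}(B_2))\leq \mathsf{d}(B_1)+\mathsf{d}(B_2)$, which gives the claimed bound.

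The only subtle step is the legitimacy of the decomposition of the max, which rests on two ingredients already established: the "no-hybrid" clause in Definition~\ref{def: box scenario} (so no index pair mixes an $\mathbb{B}_1$-measurement with an $\mathbb{B}_2$-preparation, which would otherwise require additional probabilities not present in $B_1\boxplus B_2$), and Theorem~\ref{theo:block preserve resource} to keep the candidate noncontextual. I do not anticipate a real obstacle; if anything, the statement is slightly loose, since the argument above actually delivers the stronger inequality $\mathsf{d}(B_1\boxplus B_2)\leq \max(\mathsf{d}(B_1),\mathsf{d}(B_2))$, and subadditivity follows as an immediate corollary.
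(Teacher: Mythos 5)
Your proof is correct and follows essentially the same route as the paper's: both take the optimal noncontextual approximants $B_1^*$, $B_2^*$, stack them into $B_1^*\boxplus B_2^*$ as a feasible point, and split the outer max over the disjoint index sets to get $\mathsf{d}(B_1\boxplus B_2)\leq \max(\mathsf{d}(B_1),\mathsf{d}(B_2))\leq \mathsf{d}(B_1)+\mathsf{d}(B_2)$. Your explicit appeal to Theorem~\ref{theo:block preserve resource} to certify feasibility of the stacked candidate is a point the paper leaves implicit, and your observation that the argument yields the stronger max-bound matches what the paper's own chain of equalities already shows.
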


In other words, Theorem~\ref{theo:block preserve resource} (and its manifestation through the $l_1$-distance) tells us that this composition preserves the structure of the noncontextual behaviors; therefore, if $B_1\boxplus B_2\in \mathbb{B}_1\boxplus \mathbb{B}_2$ is contextual, then it must be true that either $B_1$ or $B_2$ is contextual (or both). 

Lemma~\ref{lemma: product convex polytopes}  and Theorem~\ref{theo:block preserve resource} give interesting tools to understand a complex scenario. Indeed, if we are able to decompose a given scenario as $\mathbb{B}=\mathbb{B}_1\boxplus\mathbb{B}_2\boxplus\ldots\boxplus \mathbb{B}_n$, we can obtain resourceful behaviors on $\mathbb{B}$ by building on resourceful behaviors on its components, $\mathbb{B}_l$($l\in\{1,\ldots,n\}$).  
%

%
\subsection{Witnessing quantum contextuality}
\label{sec: ResultsQuantumC}
In sections~\ref{sec: ResultsSimFree} and \ref{sec: ResultsComposition}, we exposed general results which show how a resource-theoretic approach provides interesting tools to analyze complex contextuality scenarios. Namely, by reducing a scenario via erasing procedures, designing equivalences or looking for a nice decomposition in terms of the product \eqref{def: box product}; we can also use such a product to build up complex scenarios which preserve the resource. Here we take advantage of those results to engineer and witness \emph{quantum} contextuality. In particular, we show that scenarios of a particular form always feature quantum contextuality and that the contextual advantage present in the cloning scenario~\cite{lostaglio2019contextual} is actually inherited from a simpler scenario (that we name $\mathbb{B}_6$).

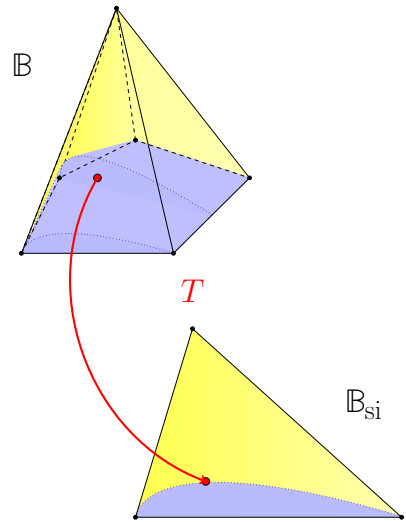
\begin{figure}[H]
\centering
\scalebox{0.5}{ 
\begin{tikzpicture}
\shade[right color=yellow!40,left color=yellow!70] (-5,3)--(-3.5,7.5)--(-3,4);
\draw[fill=yellow!70,dashed] (-5,3)--(-3.5,7.5)--(-6,1);
\shade[right color=yellow!30,left color=yellow!40] (-3,4)--(-3.5,7.5)--(0,3);
\draw[dotted,blue,fill=blue!25] (-6,1)--(-5,3)--(-0.98,2.03)--(-2,1);
\draw[blue!25,fill=blue!25] (-5,3)--(-4.65,3.55)--(-3,4)--(0,3)--(-0.98,2.03)--(-5,3);
\draw[dotted,blue,fill=blue!27,samples=50,domain=-6:-2] plot (\x,{1.518*((\x+6)^(0.7))-\x-6+1});
\draw[dotted,blue,fill=blue!27,samples=75,domain=-5:-0.98] plot (\x,{1.518*((\x+5)^(0.5))-\x-6+4});
\draw[dotted,blue,fill=blue!20,domain=-6:-4.93] plot (\x,{2.2*\x+14.2})--(-5,3)--(-6,1);
\draw[fill] (-2,1) circle [radius=0.05]; 
\draw[fill] (-6,1) circle [radius=0.05]; 
\draw[fill] (0,3) circle [radius=0.05]; 
\draw[fill] (-5,3) circle [radius=0.05]; 
\draw[fill] (-3,4) circle [radius=0.05]; 
\draw[fill] (-3.5,7.5) circle [radius=0.05]; 
\draw (-6,1) -- (-2,1)--(0,3); 
\draw[dashed] (-6,1)--(-5,3)--(-3,4)--(0,3);
\draw (-2,1)--(-3.5,7.5);
\draw (-6,1)--(-3.5,7.5);
\draw (0,3)--(-3.5,7.5);
\draw[dashed] (-3,4)--(-3.5,7.5);

\shade[right color=yellow!30,left color=yellow!70](-1.5,-1)--(-3,-6)--(4,-6);
\draw[dotted,blue,fill=blue!27,samples=75,domain=-3:4.1] plot (\x,{1.8*((\x+3)^(0.7))-\x-9});
\draw(-1.5,-1)--(-3,-6)--(4,-6)--(-1.5,-1);
\draw[fill] (-1.5,-1) circle [radius=0.05]; 
\draw[fill] (-3,-6) circle [radius=0.05]; 
\draw[fill] (4,-6) circle [radius=0.05]; 

\draw[fill=red] (-1.15,-5.05) circle [radius=0.1]; 
\draw[fill=red] (-4,3) circle [radius=0.1]; 

\node at (-6,6) {\Huge $\mathbb{B}$};
\node at (3,-3) {\Huge $\mathbb{B}_{\rm si}$};
\node[red] at (-1.5,0) {\Huge $T$};
\coordinate   (A) at (-4,3) ;
\coordinate  (B) at (-1.15,-5.05) ;
\draw[ultra thick,red,->] (A) to [bend right=50]  (B) ;
\end{tikzpicture}
}
\caption{\label{figure: Free transformation Pic}\textbf{Free operations as a tool for witnessing quantum contextuality in  complex scenarios.} By finding the existence of a free transformation $T$ towards quantum contextual behaviors in already known scenarios, such as the simplest scenario $\mathbb{B}_{\rm si}$ from Ref.~\cite{Pusey2018simplest}, one can attest contextuality in the original case.}
\end{figure}

\subsubsection{Using free operations}

Here we show examples in which one can prove existence of quantum contextuality in certain scenarios by taking advantage of free operations in the resource-theoretic approach. In particular, we use free operations (such as the trivial simulation discussed in Corollary~\ref{corollary: trivialSim}), to take these scenarios to the simplest one ($\mathbb{B}_{\rm si}$, see Def.~\ref{def: Simplest scenario}), and still find contextual quantum realizations. The idea is represented in figure \ref{figure: Free transformation Pic}.

In what follows, we discuss scenarios of a specific structure in which we can apply such technique.

\begin{exm}\label{example: BtowardsSi_M}
Consider scenarios of the kind $\mathbb{B}:= (4,|I|,2,\mathbb{E}_{\mathbb{P},si},\emptyset), |I| \geq 2$. There is always a quantum behavior $B \in \mathbb{B}$ and a free operation towards the simplest scenario, i.e. $T\in \mathcal{F}$  where $T(\mathbb{B}) = \mathbb{B}_{\rm si}$, such that  $T(B) \in \mathbb{B}_{\rm si}$ is a quantum  contextual behavior. 
\end{exm}
In other words, every prepare-and-measure scenario that can be written as above  will have at least some quantum realization which is  contextual.  Such contextual behavior \textit{may be} understood as a quantum advantage in such a scenario, as suggested by the resource-theoretic approach.

\begin{proof}

Since there are no equivalences among measurements, we can choose any set $\mathbb{M}^Q$ with $|I|$ procedures to be a realization of the measurement procedures $\mathbb{M}$. Finally, we can use the strategy of trivial simulation described in Corollary~\ref{corollary: trivialSim} to take $\mathbb{B}$ to $\mathbb{B}_{\rm si}$, simply by erasing all but two measurements. Finally, we can use the quantum realizations providing contextual advantage in this scenario, which proves our example.

Even though the above completes our proof, we might profit from an explicit description of such procedure.
Let a quantum realization of the measurement procedures $\mathbb{M}=\{M_i\}_{i\in I}$ from the scenario $\mathbb{B}$ be such that each measurement is a projective measurement, with $\mathbb{M}^Q = \mathbb{M}^Q_{\rm si} \cup \{M^{proj}_{3},\dots,M^{proj}_I\}$, for $M^{proj}_{i}$ a projective measurement for all $i=3,\dots,\vert I\vert $. Now, define maps $q_O^i, q_M$ as
\begin{align}
\label{eq: trivialSimQ}
    &q^i_O(\tilde{k}|k)=\delta_{k,\tilde{k}}\\
    &q_M(i|\tilde{i})=\begin{cases}1 \text{ if } i\in\{0,1,2\}\\
    0, \text{otherwise}.
    \end{cases}
\end{align}
 For any measurement event represented by the quantum operators $E_{\tilde{k}}^{\tilde{i}}$ from the POVMs $M_{\tilde{i}} \in \mathbb{M}^Q_{\rm si}$, we have
\begin{equation}\label{operational equation theorem}
     E_{\tilde{k}}^{\tilde{i}} = \sum_{i,k}q_O^i(\tilde{k}\vert k)E^i_k q_M(i\vert \tilde{i}). 
\end{equation}
Let the quantum realization of the preparation procedures in $\mathbb{B}$ be that given in equations \eqref{preparation 00}-\eqref{preparation --}. These can also be a quantum realization for $\mathbb{B}_{\rm si}$, since the preparation structure of both scenarios is the same. Then, if we define the free operation $T$ by means of the maps in equations \eqref{operational equation theorem} and \eqref{eq: trivialSimQ}, we can notice that the following holds,
\begin{align*}
    p(k\vert M_i, P_j) &= \text{Tr}\left ( E_k^i \rho^j\right) \stackrel{T}{\to} \\
    &\to\sum_{i,k}q_O^i(\tilde{k}\vert k)\text{Tr}\left(E_k^i\rho^j\right)q_M(i\vert \tilde{i})\\
    &=\text{Tr}\left(\sum_{i,k}q_O^i(\tilde{k}\vert k)E_k^iq_M(i\vert \tilde{i})\rho^j\right)\\
    &=\text{Tr}\left(E_{\tilde{k}}^{\tilde{i}}\rho^j\right) = p(\tilde{k} | M_{\tilde{i}}, P_j),
\end{align*}
where $E_{\tilde{k}}^{\tilde{i}}$ are the POVM elements of the measurement procedures in $\mathbb{M}^Q_{\rm si}$, discussed in the Appendix~\ref{sec:appendix1}. Therefore we might access the quantum contextual behavior from $\mathbb{B}_{\rm si}$ that is maximally quantum contextual~\cite{spekkens2005contextuality,galvao2002foundations}. Since $T$ is a free operation, the specific quantum realization  we used in the domain $\mathbb{B}$ cannot be noncontextual. 
\end{proof}

Applying the same reasoning towards preparation procedures, a generalization follows:

\begin{exm}\label{example:BtowardsSi_MP}
Consider scenarios of the kind $\mathbb{B}:= (|J|,|I|,2,\mathbb{E}_{\mathbb{P}},\emptyset),$  with $|J|$ even, $|J|\geq 4$ and  $|I| \geq 2$, and $\mathbb{E}_{\mathbb{P}}=\mathbb{E}_{{\mathbb{P}},\rm si}\cup \mathbb{E}'$, where $\mathbb{E}'$ does not involve the first four preparations. There is always a quantum behavior $B \in \mathbb{B}$ and a free operation $T\in \mathcal{F}$, with image $T(\mathbb{B}) = \mathbb{B}_{\rm si}$ the simplest scenario such that   $T(B) \in \mathbb{B}_{\rm si}$ is a quantum  contextual behavior. 
\end{exm}
 It is clear that there exists a quantum contextual correlation for such a scenario since we can consider the same quantum contextual behavior from example~\ref{example: BtowardsSi_M}, and complete the procedures with anything such that the equivalences $\mathbb{E}'$ do not involve the first four procedures. Then,  there will certainly exist some pre-processings from these towards the preparations \eqref{preparation 11}-\eqref{preparation --}, with the same description as the one given by Corollary~\ref{corollary: trivialSim}. 
 
\subsubsection{Using the composition}


In this section we take advantage of the consequences of Theorem~\ref{theo:block preserve resource} to witnessing quantum contextuality. Namely,
\begin{coro}
Consider a behavior $B_1 \boxplus B_2$, with $B_1\in\mathbb{B}_1$ and $B_2\in\mathbb{B}_2$, that has some quantum realizations and is contextual. Then, $B_1$ or $B_2$ must be contextual. Mathematically,
\begin{align}
    &B_1 \boxplus B_2 \in QC(\mathbb{B}_1 \boxplus \mathbb{B}_2) \\
    &\iff B_1 \in QC(\mathbb{B}_1)\,\, or\,\, B_2 \in QC(\mathbb{B}_2).
\end{align}
Where $QC(\mathbb{B})$ is the set of contextual points in the scenario that have some quantum realization.
\end{coro}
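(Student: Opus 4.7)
The plan is to obtain the corollary as a direct application of Theorem~\ref{theo:block preserve resource}, supplemented by the easy observation that quantum realizations themselves compose across $\boxplus$ via a tensor-product embedding. Both directions of the biconditional then become short arguments once one fixes the ambient Hilbert space.

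For the forward direction, I would assume $B_1 \boxplus B_2 \in QC(\mathbb{B}_1 \boxplus \mathbb{B}_2)$, meaning the composed behavior is contextual and admits a quantum realization on some $\mathcal{H}$ given by states $\{\rho_j^{(\ell)}\}$ and POVM elements $\{E_{k|M_i}^{(\ell)}\}$ for $\ell=1,2$. Restricting the realization to the labels of $\mathbb{B}_\ell$ produces a quantum realization of $B_\ell$ on the same $\mathcal{H}$, so both $B_1$ and $B_2$ automatically have quantum realizations. By the contrapositive of Theorem~\ref{theo:block preserve resource}, contextuality of $B_1 \boxplus B_2$ forces at least one component to be contextual, and combined with its quantum realization this places it in $QC(\mathbb{B}_\ell)$.

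For the converse, assume without loss of generality $B_1 \in QC(\mathbb{B}_1)$ and (as built into the hypothesis that $B_1 \boxplus B_2$ has a quantum realization) that $B_2$ also admits a quantum realization, say $B_\ell$ realized on $\mathcal{H}_\ell$ by $\{\rho_j^{(\ell)}\}, \{E_{k|M_i}^{(\ell)}\}$. Fixing ancillary states $\omega_\ell$ on $\mathcal{H}_\ell$, I would embed on $\mathcal{H}_1 \otimes \mathcal{H}_2$ as
\begin{align*}
\rho_j^{(1)} &\mapsto \rho_j^{(1)} \otimes \omega_2, & E_{k|M_i}^{(1)} &\mapsto E_{k|M_i}^{(1)} \otimes \mathbb{1}_2,\\
\rho_j^{(2)} &\mapsto \omega_1 \otimes \rho_j^{(2)}, & E_{k|M_i}^{(2)} &\mapsto \mathbb{1}_1 \otimes E_{k|M_i}^{(2)}.
\end{align*}
All non-hybrid probabilities (the only ones in $B_1 \boxplus B_2$, by def.~\ref{def: box scenario}) coincide with the originals by the factorization of the trace, and POVM completeness is preserved since $\sum_k E_{k|M_i}^{(\ell)} \otimes \mathbb{1} = \mathbb{1}$. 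Invoking the $\Leftarrow$ direction of Theorem~\ref{theo:block preserve resource}, contextuality of $B_1$ propagates to contextuality of $B_1 \boxplus B_2$, so $B_1 \boxplus B_2 \in QC(\mathbb{B}_1 \boxplus \mathbb{B}_2)$.

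The main obstacle is verifying that the embedding respects every operational equivalence of the composed scenario. By Eq.~\eqref{op eq box scenario}, the equivalence vectors of $\mathbb{B}_1 \boxplus \mathbb{B}_2$ are block-structured, so each one involves only $\mathbb{B}_1$-primitives or only $\mathbb{B}_2$-primitives. A preparation equivalence $\sum_j \alpha_j P_j^{(1)} \simeq \sum_j \beta_j P_j^{(1)}$ becomes $\bigl(\sum_j \alpha_j \rho_j^{(1)}\bigr)\otimes \omega_2 = \bigl(\sum_j \beta_j \rho_j^{(1)}\bigr)\otimes\omega_2$, which holds because tomographic completeness in $\mathbb{B}_1$ forces equality of the density-matrix combinations at the operator level on $\mathcal{H}_1$. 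Measurement equivalences are symmetric, using $E\otimes \mathbb{1}_2$ in place of $\rho\otimes \omega_2$. Thus no equivalence is broken, the realization is genuinely a valid quantum realization of $B_1\boxplus B_2$ in $\mathbb{B}_1\boxplus \mathbb{B}_2$, and the biconditional is established.
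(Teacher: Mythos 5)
Your proof is correct and follows essentially the same route as the paper, which presents this corollary as an immediate consequence of Theorem~\ref{theo:block preserve resource} (via its contrapositive) without further argument. The only addition is your explicit tensor-product embedding showing that quantum realizability passes between the composite and its components --- a detail the paper leaves implicit but which your construction handles correctly, since the block structure of Eq.~\eqref{op eq box scenario} guarantees that each operational equivalence involves only one component's primitives and is therefore preserved under the embedding.
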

Hence, using the composition $\boxplus$, we are  constructing higher-dimensional polytopes that  will inherit quantum contextuality from its lower-dimensional components. Thus, whenever very complex scenarios can be understood as the product of lower-dimensional ones, we can find the arising noncontextual polytope structure from the product elements and build up quantum violations from those in the components. We shall see examples of how this can be done.
First, let us  give an intuitive geometric view of why complex scenarios acquire quantum contextuality from simple ones. Let \begin{tikzpicture}
\draw[blue] (1,0)--(1.5,0);
\draw[fill] (0,0) circle[radius=0.1] ;
\draw (0,0)--(1,0);
\draw[fill] (1,0) circle[radius=0.1];
\draw[fill](1.5,0)--(2,0);
\draw[fill] (2,0) circle[radius=0.1];
\draw[fill=blue] (1.5,0) circle[radius=0.1];
\end{tikzpicture} be the pictorial representation of a prepare-and-measure experimental scenario, meaning that this is some convex polytope $\mathbb{B}$ with the quantum set containing the left black line and the one in blue. Then, the product $\mathbb{B}\boxplus \mathbb{B}$ between two of these 1-dimensional convex polytopes will be such as represented in Fig.~\ref{Bloch para med}.

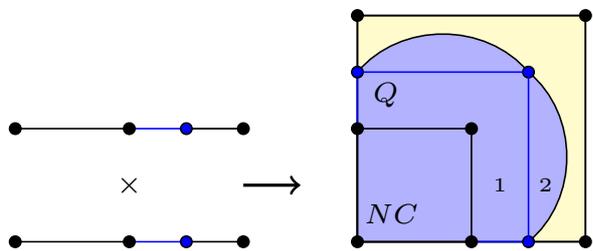
\begin{figure}[H]
\centering
\scalebox{1.5}{
\begin{tikzpicture}
\draw[fill=yellow!25] (0,0)--(0,2)--(2,2)--(2,0)--(0,0);
\coordinate   (A) at (1.5,0) ;
\coordinate  (B) at (0,1.5) ;
\coordinate (C) at (1.5,1.5);
\draw[fill=blue!30] (A) to [bend right=50]  (C) to [bend right=50] (B)--(0,0)--(A);
\draw[blue] (0,1)--(0,1.5);
\draw[blue] (1,0)--(1.5,0);
\draw[fill=blue!30] (0,0)--(1,0)--(1,1)--(0,1)--(0,0);
\draw[fill] (0,0) circle[radius=0.05] ;
\draw (0,0)--(1,0);
\draw[fill] (1,0) circle[radius=0.05];
\draw (2,0)--(2,2)--(0,2);
\draw[fill=blue] (1.5,1.5) circle[radius=0.05];
\draw[fill](1.5,0)--(2,0);
\draw[fill] (2,0) circle[radius=0.05];
\draw[fill] (2,2) circle[radius=0.05];
\draw[fill] (1,1) circle[radius=0.05];
\draw[fill=blue] (1.5,0) circle[radius=0.05];
\draw (0,0)--(0,1);
\draw[fill] (0,1) circle[radius=0.05];
\draw[fill](0,1.5)--(0,2);
\draw[fill] (0,2) circle[radius=0.05];
\draw[fill=blue] (0,1.5) circle[radius=0.05];

\draw[blue](1.5,0)--(1.5,1.5)--(0,1.5);

\node at (0.3,0.25) {\scriptsize $NC$};
\node at (0.25,1.3) {\scriptsize $Q$};
\node at (1.25,0.5) {\tiny $1$};
\node at (1.65,0.5) {\tiny $2$};

\draw[blue] (-2,1)--(-1.5,1);
\draw[fill] (-3,1) circle[radius=0.05] ;
\draw (-3,1)--(-2,1);
\draw[fill] (-2,1) circle[radius=0.05];
\draw[fill](-1.5,1)--(-1,1);
\draw[fill] (-1,1) circle[radius=0.05];
\draw[fill=blue] (-1.5,1) circle[radius=0.05];

\draw[blue] (-2,0)--(-1.5,0);
\draw[fill] (-3,0) circle[radius=0.05] ;
\draw (-3,0)--(-2,0);
\draw[fill] (-2,0) circle[radius=0.05];
\draw[fill](-1.5,0)--(-1,0);
\draw[fill] (-1,0) circle[radius=0.05];
\draw[fill=blue] (-1.5,0) circle[radius=0.05];

\node at (-2,0.5) {\scriptsize $\times$};
\draw[thick,->] (-1,0.5)--(-0.5,0.5);
\end{tikzpicture}
}
\caption{\label{Bloch para med}\textbf{Representation of the polytope structure arising from the product scenario.} We stress that it is not clear how should be the new form of the \textit{quantum} set $Q(\mathbb{B})$, even though it is clear the polytope structure for both $NC$ and the larger polytope of statistics. In this picture we have used the fact that $NC(\mathbb{B}) \subseteq Q(\mathbb{B})$. From the convex nature of $Q(\mathbb{B})$, the product scenario must have a quantum contextual set that is at least of the form given by region $1$, but it could also be given by region $2$ and the study of maximal violations for noncontextuality inequalities shall answer such questions, see Refs.~\cite{ambainis2016parity,chailloux2016optimalbounds}.}
\end{figure}
From Fig.~\ref{Bloch para med}, we see that if one constructs very complex scenarios, when they are associated with decompositions of $\mathbb{B}^{\boxplus n}$ for $n$ as large as we could imagine, the resource is always present.

The arguably simplest construction one might consider is to take sequential products of the simplest scenario, $\mathbb{B}:=\mathbb{B}^{\boxplus n}_{\rm si}$. Interestingly, using some symmetry arguments and the tight noncontextuality inequalities of the simplest scenario, we can obtain the following result for such construction:

\begin{lemma}
\label{lemma:general violations}
For any scenario of the form  $\mathbb{B}:=\mathbb{B}_{\rm si}^{\boxplus n}$, $n\geq 1$, every tight and nontrivial noncontextuality inequality will be violated by some  quantum contextual behavior. 
\end{lemma}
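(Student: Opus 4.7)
The plan is to combine the facet-counting consequence of Lemma~\ref{lemma: product convex polytopes} with the symmetry of the simplest scenario and the known existence of at least one quantum contextual behavior for $\mathbb{B}_{\rm si}$. First, I would invoke Lemma~\ref{lemma: product convex polytopes} (together with Theorem~\ref{theo:block preserve resource}, which identifies $NC(\mathbb{B}_{\rm si}^{\boxplus n})$ with the Cartesian product $NC(\mathbb{B}_{\rm si})\times\cdots\times NC(\mathbb{B}_{\rm si})$) to argue that every facet of $NC(\mathbb{B}_{\rm si}^{\boxplus n})$ is of the ``lifted'' form: a facet $F_\ell$ of the $\ell$-th copy of $NC(\mathbb{B}_{\rm si})$ crossed with the full polytopes of the remaining $n-1$ copies. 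In particular, each tight nontrivial noncontextuality inequality of $\mathbb{B}_{\rm si}^{\boxplus n}$ reads as a tight nontrivial inequality acting only on the coordinates of a single component $\mathbb{B}_{\rm si}$.

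Next, I would appeal to the classification of the noncontextual polytope of the simplest scenario (as in Ref.~\cite{Pusey2018simplest} and the symmetry arguments in Appendix~\ref{appenidx: symmetry}): the nontrivial facets of $NC(\mathbb{B}_{\rm si})$ form a single orbit under the natural symmetry group acting by relabeling outcomes, measurements, and preparations compatible with the equivalence \eqref{operational eq simplest scenario}. Hence, showing that \emph{one} such facet admits a quantum violation yields, by pushing forward the quantum realization through the relevant symmetry, a quantum violation of every other nontrivial facet. The existence of at least one quantum violation is already exhibited in Ref.~\cite{Pusey2018simplest} through the standard qubit realization with preparations \eqref{preparation 00}--\eqref{preparation --} and the measurements $\mathbb{M}^Q_{\rm si}$ appearing in corollary~\ref{corollary: trivialSimQ}.

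Finally, to lift such a violation from a single $\mathbb{B}_{\rm si}$ component to the product $\mathbb{B}_{\rm si}^{\boxplus n}$, fix an index $\ell$ and the corresponding tight nontrivial inequality, take the quantum contextual behavior $B^{(\ell)}\in\mathbb{B}_{\rm si}$ violating it, and set $B := B^{(1)}\boxplus\cdots\boxplus B^{(n)}$, where for $m\neq\ell$ the $B^{(m)}$ are chosen to be any quantum realizations of $\mathbb{B}_{\rm si}$ (for instance, a noncontextual one built from a common pair of states). Since the facet inequality only constrains the coordinates associated with the $\ell$-th component, its value on $B$ equals its value on $B^{(\ell)}$, so $B$ violates it; moreover, $B$ is quantumly realized because the composition just appends independent quantum experiments.

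The main obstacle is justifying the first step cleanly: one must ensure that the facet structure of the product polytope really consists only of lifts of facets of the components, and that none of the lifted facets of $NC(\mathbb{B}_{\rm si})$ become trivial in the composed scenario. Both facts follow from the product characterization of polytopes used in Lemma~\ref{lemma: product convex polytopes} and from the fact that the composition \eqref{op eq box scenario} keeps the operational equivalences of each component intact, so the noncontextual inequalities of the components are inherited verbatim. Once this is in place, the symmetry argument plus the Pusey quantum violation complete the proof.
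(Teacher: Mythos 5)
Your sketch follows essentially the same route as the paper's proof: facets of $NC(\mathbb{B}_{\rm si}^{\boxplus n})$ are lifts of facets of the components (the paper does this by induction using $NC(\mathbb{B})=NC(\mathbb{B}_{\rm si})\times NC(\mathbb{B}_{\rm si}^{\boxplus(n-1)})$), a single quantum violation in $\mathbb{B}_{\rm si}$ is transported to every nontrivial facet by relabeling symmetries, and the violation is lifted to the product by composing with arbitrary quantum behaviors on the remaining components. The only part you assert rather than establish is the single-orbit/transitivity claim for the nontrivial facets of $NC(\mathbb{B}_{\rm si})$, which is where the paper does most of its work (exhibiting explicit free permutations, showing via the $l_1$-monotone that they preserve contextuality in both directions, and checking the facet--vertex correspondence); that claim is true, so your argument is correct as a sketch.
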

We provide a proof in  Appendix~\ref{appenidx: symmetry}. Therefore, for these scenarios, there exist quantum contextual correlations \emph{with respect to all (nontrivial) noncontextuality inequalities that define the polytope} $NC(\mathbb{B})$. A resource theoretic consequence arises as a corollary from such lemma.

\begin{coro}[Quantum advantages for $\mathbb{B}=\mathbb{B}_{\rm si}^{\boxplus n}$]
\label{corollary: QAdvantagesfromBsi}
Consider a quantum information task that has a success rate defined by a function $g: \mathbb{B} \to \mathbb{R}_+$, for $\mathbb{B}$ of the form of $\mathbb{B}_{\rm si}^{\boxplus n}$, such that the noncontextual bound for the success, $g^{NC}(B) \leq \delta$, for some $\delta \in \mathbb{R}$, can be expressed as a linear combination of the noncontextuality inequalities of $NC(\mathbb{B})$. Then, there exists a quantum behavior $B^Q$ such that $g(B^Q) > \delta$.
\end{coro}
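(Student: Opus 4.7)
The plan is to leverage Lemma~\ref{lemma:general violations}, which provides a quantum violator for every tight, nontrivial noncontextuality inequality of $\mathbb{B}_{\rm si}^{\boxplus n}$. Since the noncontextual bound $g \leq \delta$ is assumed expressible as a linear combination of such inequalities, the strategy is to exhibit a quantum behavior that also violates the combined inequality.

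First, I would unpack the hypothesis. Writing the nontrivial facet inequalities of $NC(\mathbb{B})$ as $L_f(B) \leq c_f$, the assumption yields nonnegative scalars $\lambda_f$, not all zero, such that $\delta - g(B) = \sum_f \lambda_f \bigl(c_f - L_f(B)\bigr)$ on the affine hull of the polytope, with at least one $\lambda_{f_0} > 0$ (otherwise the bound would be trivial). Nonnegativity of the $\lambda_f$ is implicit from the bound being valid on $NC(\mathbb{B})$, by LP duality / Farkas.

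Second, I would exploit the product structure given by Lemma~\ref{lemma: product convex polytopes} and Theorem~\ref{theo:block preserve resource}. For $\mathbb{B} = \mathbb{B}_{\rm si}^{\boxplus n}$ the facets of $NC(\mathbb{B})$ partition into $n$ disjoint groups, each consisting of facets inherited from a single copy of $\mathbb{B}_{\rm si}$. This induces a decomposition $\delta - g(B) = \sum_{i=1}^{n} \bigl(\delta_i - g_i(B_i)\bigr)$, where each summand is a nonnegative combination of noncontextuality inequalities on the $i$-th copy, with at least one index $i_0$ for which the contribution is nonzero.

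Third, in the $i_0$-th factor I would invoke Lemma~\ref{lemma:general violations}: the goal is to produce a quantum behavior $B^Q_{i_0}$ on $\mathbb{B}_{\rm si}$ with $g_{i_0}(B^Q_{i_0}) > \delta_{i_0}$. If $g_{i_0}$ involves only one facet, this is immediate. For multiple facets I would rely on the explicit quantum realization treated in Appendix~\ref{appenidx: symmetry} (built from the states in eqs.~\eqref{preparation 00}--\eqref{preparation --} and the measurements $\mathbb{M}^Q_{\rm si}$), which by the symmetry argument saturates every nontrivial facet of $\mathbb{B}_{\rm si}$ with a strictly positive excess; a nonnegative combination of strictly positive numbers remains strictly positive. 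For the remaining factors $i \neq i_0$, I pick any quantum behavior $B^Q_i$ satisfying the local noncontextual bound $g_i(B^Q_i) \leq \delta_i$ (e.g.\ a classical preparation/measurement choice). Composing via the binary operation produces a quantum behavior $B^Q := B^Q_1 \boxplus \cdots \boxplus B^Q_n$ on $\mathbb{B}$, and by linearity plus the strict gain in the $i_0$-th factor, $g(B^Q) > \delta$.

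The main obstacle is the third step when several facets contribute jointly to $g_{i_0}$. Lemma~\ref{lemma:general violations} only guarantees that each single facet is violated by \emph{some} quantum behavior, and a priori different facets could require different behaviors, so a conic combination might not be violated by any of them individually. The resolution is to use the symmetry of the maximally contextual realization of $\mathbb{B}_{\rm si}$ from Appendix~\ref{appenidx: symmetry}, which lies strictly beyond every nontrivial facet simultaneously, turning the combined inequality into a strictly positive sum on that single behavior.
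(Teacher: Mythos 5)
Your setup (extracting nonnegative multipliers by LP duality and splitting the combination across the $n$ factors using the facet count $\vert F(P\times Q)\vert=\vert F(P)\vert+\vert F(Q)\vert$) is reasonable, and you correctly isolate the real difficulty: when several facets contribute to $g_{i_0}$, Lemma~\ref{lemma:general violations} only hands you a \emph{different} violating behavior for each facet. But your proposed resolution is wrong. No behavior --- quantum or otherwise --- can lie ``strictly beyond every nontrivial facet simultaneously'': the eight nontrivial inequalities of $NC(\mathbb{B}_{\rm si})$ come in antipodal pairs whose left-hand sides are negatives of one another (e.g.\ $h_2(B)+h_8(B)=-2$ identically, comparing $p_{12}+p_{22}-p_{13}-p_{24}\leq 1$ with \eqref{simplest equations 9}), so at most one member of each pair can ever be violated. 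Concretely, the symmetric realization of Table~\ref{tab: behavior numbers} violates only \eqref{counter example equation}; it sits exactly on the hyperplanes of $h_1,h_2$ and strictly inside the others. The whole point of Appendix~\ref{appenidx: symmetry} is that one must apply a \emph{different} free permutation to reach a violator of each facet --- these are eight distinct behaviors, not one. So your step three collapses, and with it the claim that an arbitrary conic combination of facets is violated by a single quantum point. (Indeed, taken literally for a combination with equal weights on an antipodal pair, the resulting ``bound'' is violated by nothing at all, which shows the statement needs the combination to reduce, up to positive scaling and affine offset, to a single nontrivial facet --- or at least to a combination that is itself a tight nontrivial inequality.)

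For comparison: the paper offers no explicit proof of this corollary; it treats it as an immediate consequence of Lemma~\ref{lemma:general violations}, which only works under the reading that the noncontextual bound $g^{NC}(B)\leq\delta$ is (a positive rescaling of) one of the facet-defining inequalities of $NC(\mathbb{B})$ --- as happens in the motivating examples such as parity-oblivious tasks. Under that reading the conclusion is one line: Lemma~\ref{lemma:general violations} supplies a quantum $B^Q$ with $h(B^Q)>0$ for that facet, hence $g(B^Q)>\delta$. If you want to handle genuine multi-facet combinations, you would need an additional argument (e.g.\ that the relevant multipliers are supported on a set of facets admitting a common quantum violator), which neither you nor the paper provides.
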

Thus, Lemma~\ref{lemma:general violations} represents a general proof of quantum \emph{advantage} in tasks related to scenarios of the form $\mathbb{B}_{\rm si}^{\boxplus n}$, for $n\geq 2$ (whenever the success rate of the operational task is defined by a function $g$ that is a convex-linear function of the noncontextuality inequalities $NC(\mathbb{B}_{\rm si})$).

As another notable example, we discuss the quantum cloning scenario from Ref.~\cite{lostaglio2019contextual} which we name as $\mathbb{B}_{\rm qc}$ and define in  Appendix~\ref{appenidx: blocks}. $\mathbb{B}_{\rm qc}$ is an operational scenario that reproduces the statistics for an important quantum task known as state-dependent quantum cloning, in which contextuality underpins quantum advantage~\cite{lostaglio2019contextual}. 


\begin{exm}[Quantum cloning inherits contextuality from $\mathbb{B}_6$]
\label{example:Bqc}

The scenario $\mathbb{B}_{\rm qc}$ related to the state-dependent quantum cloning task, can be written as
\begin{equation}
\label{eq:quantum cloning pieces}
    \mathbb{B}_{\rm qc} = \mathbb{B}_{6}\boxplus \mathbb{B}_{6} \boxplus \mathbb{B}_{6} ,
\end{equation}
where $\mathbb{B}_6 := (4,6,2,\mathbb{E}_{\mathbb{P},\rm si},\emptyset)$. The \textit{inner} polytope structure of $\mathbb{B}_{\rm qc}$ is then given by $\mathbb{B}_{6}$.
\end{exm}
 This example allow us to conclude the following: Eq.~\eqref{eq:quantum cloning pieces} is a proof of quantum contextuality in the quantum cloning scenario, since $\mathbb{B}_6$ has quantum contextual behaviors. The first point is in agreement with Refs.~\cite{lostaglio2019contextual,sainz2020non} providing a new understanding  of the advantage in the cloning scenario, i.e. in terms of quantum contextuality present in the smaller $\mathbb{B}_6$.

\section{Discussion} 
\label{sec:conclusions}

In this work, we use the resource theory of contextuality introduced in Ref.~\cite{duarte2018resource}. We examine, with a resource theoretic perspective, the preservation of contextuality due to measurement simulability; known techniques used in experimental tests of contextuality; and the polytope structure of novel composed scenarios. These composed scenarios can be interpreted as a strategy for extending inequalities from smaller scenarios to larger ones. 

To elaborate, we establish a connection between the simulation of measurements in operational theories with free operations, and the creation of simpler scenarios achieved by omitting certain measurements. Furthermore, by recognizing that mixing is formally a free operation, we offer a new interpretation of the engineering of operational equivalences used in the tests detailed in Ref.~\cite{mazurek2016experimental}. We conclude that the free operations provide a simple and rigorous argument in favor of the experimental conclusions drawn from Ref.~\cite{mazurek2016experimental} regarding the use of secondary procedures as a tool for witnessing contextuality from imperfect correlations.  

Moreover, we introduce a composition of scenarios allowing the construction of complex scenarios while conserving the resource. This brings light to the importance of the resource theory developed in Ref.~\cite{duarte2018resource},  specially in situations limited by the intrinsic complexity of numerically studying correlation polytopes.

We then apply the techniques to analyze \emph{quantum} contextuality, leading to both foundational and practical implications. We have demonstrated that there always exists quantum contextual behaviors for a class of prepare-and-measure scenarios -- and such a class encompasses the scenarios from Refs.~\cite{schmid2018contextual,lostaglio2019contextual,Pusey2018simplest,spekkens2009preparation}. We also show that quantum contextuality is present for every nontrivial facet of the noncontextual polytope for scenarios of the form $\mathbb{B}_{\rm si}^{\boxplus n}$. Moreover, we show that the scenario related to the task of state-dependent cloning, $\mathbb{B}_{\rm qc}$, can be decomposed in terms of the simpler scenario $\mathbb{B}_6$. Thus, we can conclude that the quantum resource present in $\mathbb{B}_{\rm qc}$ is inherited from quantum contextuality in $\mathbb{B}_6$. This also allows one to understand $NC(\mathbb{B}_{\rm qc})$ via the inequalities of $NC(\mathbb{B}_6)$, which is a much simpler computational task. 

\subsection{Relation with previous work}

Resource theoretic investigations of prepare-and-measure  (generalized) contextuality scenarios remain largely unexplored. In this work, we have presented various qualitative and quantitative results that analyze the specific structure of these scenarios. We believe these findings can be impactful in practical applications, enabling the development of novel demonstrations of quantum contextuality, or the lack thereof, akin to the successful application of the monotone $\mathsf{d}$, as seen in Ref.~\cite{baldi2021emergence}, for constraining the emergence of noncontextuality under quantum Darwinism.

On the contrary, for a different yet related notion of contextuality, namely Kochen--Specker (KS) contextuality, not only there is a vast literature devoted to formalize a resource theory of it~\cite{grudka2014quantifying,horodecki2015axiomatic,abramsky2017contextual,abramsky2019comonadic,amaral2019resource,li2020robustness,horodecki2023rank}, but also its application to relevant tasks~\cite{frembs2018contextuality,karanjai2018contextuality,karvonen2021neither}. In this case, very similar constructions, such as the operation $\boxplus$ we have considered here, have been introduced for empirical models in the resource theory of KS-contextuality in Ref.~\cite{abramsky2019comonadic}.  Our description differs from theirs since it uses a fairly \textit{different} notion of nonclassicality, namely,  generalized contextuality. This fundamental difference not only conceptualizes the two notions distinctly but also results in a completely different scenario description. Consequently, there is no compelling argument for why results from generalized contextuality scenarios should apply to the measurement scenarios in the KS formalism.

\subsection{Further directions}
  
The examples here discussed are far from exhausting the possibilities of the tools developed. Therefore, finding other physically appealing scenarios, or introducing novel composition rules beyond $\boxplus$ is an interesting perspective. Further investigations could also use small perturbations over the quantum measurements~\cite{d2005classical}, and with the help of contextuality monotones, try to understand  how much can one perturb the quantum behaviors and still witness contextuality, which is  important for experimental implementation of generalized noncontextuality.

One potential venue of exploration involves studying the interplay between the composition rule $\boxplus$ and scenarios relevant for quantum computation. For instance, it remains unclear the relevance of generalized contextuality in some measurement based schemes of quantum computation~\cite{braviy2016trading,bartolucci2023fusion}. Possibly, each measurement step leads to some defined scenario $\mathbb{B}$, and the sequence of measurements leading to some sequence $\mathbb{B}_1 \boxplus \dots \boxplus \mathbb{B}_n$. Furthermore, it is likely that the resource theory of generalized contextuality will play a leading role in explaining hardness of classically simulating quantum computations for the so-called $\Lambda$-polytope method~\cite{zurel2020hidden,zurel2023simulating,cihan2021ontheextremal}, that provides an ontological model that is measurement noncontextual, but preparation contextual, and for which it is unknown what nonclassical resources drive the simulation overhead.

To conclude, our work motivates the utilization of resource-theoretic analysis  of generalized contextuality, as we provide new qualitative insights into the underlying polytope structure within prepare-and-measure scenarios. We have demonstrated the applicability of the composition rule $\boxplus$, that can be interpreted as an instance of a lifting of noncontextuality inequalities.  Formally, lifting has been successfully employed in analyzing other pertinent quantum information polytopes, such as Bell inequalities that define facets in local polytopes~\cite{pironio2005lifting}. Nevertheless, the formal lifting strategies for noncontextuality inequalities in prepare-and-measure scenarios remain an uncharted territory, lacking a formal complete description. We hope our work paves the way for future investigations,  encouraging exploration in this direction. \newline 

\section*{Acknowledgements} \label{sec:acknowledgements}

The authors would like to thank Ana Belén Sainz and Marcelo Terra Cunha for helpful comments on an early version of this work.

RW,  AT and BA would like to thank National Council for the Improvement of Higher Education (CAPES)  and National Council for Scientific and Technological Development (CNPq)  for their support during this research. RDB would like to thank S\~{a}o Paulo Research Foundation (Fapesp), that supported this work via projects numbers 2016/24162-8 and 2019/02221-0.  BA was supported financially by   S\~{a}o Paulo Research Foundation, Aux\'ilio \`a Pesquisa - Jovem Pesquisador, grant number $2020/06454-7$, and Instituto Serrapilheira, Chamada 2020.

\bibliography{bibliography}

\begin{appendix}
\onecolumngrid
\newpage
\section{Noncontextual polytope for the simplest scenario} \label{sec:appendix2}

With the methods developed by Ref.~\cite{schmid2018all} it is possible to fully characterize the noncontextual polytope for the scenario $\mathbb{B}_{\rm si}$. For a given behavior $B \in \mathbb{B}_{\rm si}$ using the shorthand for $p(1\vert M_i, P_j) = p_{ij}$ we have that the facets of the noncontextual polytope are tightly characterized by the following set of inequalities:
\begin{align}
    0 \leq p_{ij} \leq 1, \,\, \forall M_i, P_j \label{simplest equations 1}\\
    p_{12}+p_{22}-p_{14}-p_{23}\leq 1\\
    p_{12}+p_{22}-p_{13}-p_{24}\leq 1\\
    p_{22}+p_{13}-p_{12}-p_{24}\leq 1\\
    p_{12}+p_{23}-p_{22}-p_{14}\leq 1\\
    p_{22}+p_{14}-p_{12}-p_{23}\leq 1\\
    p_{23}+p_{14}-p_{12}-p_{22}\leq 1\label{correction to all}\\
    p_{12}+p_{24}-p_{22}-p_{13}\leq 1\label{counter example equation}\\
    p_{13}+p_{24}-p_{22}-p_{12}\leq 1\label{simplest equations 9}
\end{align}  
Taken from Refs.~\cite{schmid2018all} and~\cite{eu2020mestrado}.

\section{Proof of theorem~\ref{theo:block preserve resource}}\label{appenidx: blocks}


\begin{dem}

Let $B_1 \in NC(\mathbb{B}_1),B_2 \in NC(\mathbb{B}_2) $. Hence, there are $(\Sigma^{(i)}, \Lambda^{(i)}, \Pi^{(i)}, \Theta^{(i)})$ where $\Pi^{(i)}$ and $\Theta^{(i)}$, $i=1,2$, respect the operational equivalences at the ontological model level respectively for each scenario. For sets of labels we define $K_i, I_i, J_i, \{a_i\}, \{b_i\}$ as before (see Def.~\ref{def: box scenario}), for their respective operational primitives from $\mathbb{B}_i$. The scenarios are finite and the operational equivalences are fixed and finite as well, so each set ranges over a finite set of labels.

\begin{align*}
    &K := K_1 \cup K_2 \\
    &I := I_1 \cup I_2 \\
    &J := J_1 \cup J_2 \\
    &\{a\}_{a=1}^{|\mathbb{E}_{\mathbb{P}_1 \cup \mathbb{P}_2}|} := \{a_1\}_{a_1=1}^{|\mathbb{E}_{\mathbb{P}_1}|} \cup \{a_2\}_{a_2=1}^{|\mathbb{E}_{\mathbb{P}_2}|}, \\
    &\{b\}_{b=1}^{|\mathbb{E}_{\mathbb{M}_1 \cup \mathbb{M}_2}|} := \{b_1\}_{b_1=1}^{|\mathbb{E}_{\mathbb{M}_1}|} \cup \{b_2\}_{b_2=1}^{|\mathbb{E}_{\mathbb{M}_2}|}.
\end{align*}

From the definition of noncontextuality at the ontological model level, we have the equations, here and throughout this appendix we simplify the notation to $p(k|M_i,P_j) \equiv p(k|i,j)$.

\begin{align}
   & p(k_1|i_1,j_1) = \sum_{\lambda_1 \in \Lambda_1}\xi_{[k_1|i_1]}(\lambda_1)\mu_{j_1}(\lambda_1), \\ 
    &p(k_2|i_2,j_2) = \sum_{\lambda_2\in\Lambda_2}\xi_{[k_2|i_2]}(\lambda_2)\mu_{j_2}(\lambda_2), \\
    &\label{eq op 1}\sum_{j_1}(\alpha_{j_1}^{a_1}-\beta_{j_1}^{a_1})\mu_{j_1}(\Omega_1) = 0, \,\,\, \forall  a_1, \forall \Omega_1 \in \Sigma_1 \\
    &\label{eq op 2}\sum_{j_2}(\alpha_{j_2}^{a_2}-\beta_{j_2}^{a_2})\mu_{j_1}(\Omega_2) = 0,\,\,\, \forall  a_2, \forall \Omega_2 \in \Sigma_2 \\
    &\sum_{k_1,i_1}(\alpha_{[k_1|i_1]}^{b_1}-\beta_{[k_1|i_1]}^{b_1})\xi_{[k_1|i_1]}(\lambda_1) = 0,\,\,\, \forall \lambda_1,b_1, \label{eq op 5}\\
    &\sum_{k_2,i_2}(\alpha_{[k_2|i_2]}^{b_2}-\beta_{[k_2|i_2]}^{b_2})\xi_{[k_2|i_2]}(\lambda_2) = 0,\,\,\, \forall \lambda_2,b_2. \label{eq op 6}
\end{align}
In \eqref{eq op 1}-\eqref{eq op 6}, $\xi_{[\cdot | i_1]}\in \Theta^{(1)}, \mu_{j_1} \in \Pi^{(1)}$, and similarly for the remaining distributions. If we consider the product between the behaviors, $B_1 \boxplus B_2$, we can construct a novel ontological model using as the ontic space $\Lambda := \Lambda^{(1)} \sqcup \Lambda^{(2)}$ the disjoint union between the two sets. We then define $\tilde{\xi}_{[k|i]}: \Lambda \to [0,1]$ as,
    \begin{equation*}
    \tilde{\xi}_{[k|i]}(\lambda) := \left\{ \begin{matrix}
    \xi_{[k_1|i_1]}(\lambda_1), & \lambda = (\lambda_1,1)  \\
    \xi_{[k_2|i_2]}(\lambda_2), & \lambda = (\lambda_2,2)
    \end{matrix}\right.
\end{equation*}


 \begin{align*}
        \forall \lambda \in \Lambda,  \sum_{k \in K} \tilde{\xi}_{[k|i]}(\lambda) =  \left\{ \begin{matrix}
    \sum_{k\in K}\tilde{\xi}_{[k|i_1]}(\lambda), \text{ if }i\in I_1\\
    \sum_{k\in K}\tilde{\xi}_{[k|i_2]}(\lambda), \text{ if }i \in I_2
    \end{matrix}\right. = \left\{ \begin{matrix}
    \sum_{k_1\in K_1}\tilde{\xi}_{[k_1|i_1]}(\lambda), \text{ if }i\in I_1\\
    \sum_{k_1\in K_1}\tilde{\xi}_{[k_1|i_2]}(\lambda), \text{ if }i \in I_2
    \end{matrix}\right. = \left\{ \begin{matrix}
    1,\\
    1
    \end{matrix}\right. = 1
    \end{align*}

So that the extended functions are normalized in the ontic space $\Lambda$. We have considered that, whenever $i \in I_1 \cap I_2$ any function $\xi_{[k_1|i_1]}$ or $\xi_{[k_2|i_2]}$ will serve, we then just need to pick one and use it for our noncontextual ontological model. This means that if we have two scenarios with the same procedures, $\{M_1,M_2\},\{M_1,M_2\} \to \{M_{1_1},M_{2_1},M_{1_2},M_{2_2}\} \equiv \{M_1,M_2\}$. Therefore we can recognize if two procedures are just simply the same. In this sense, we can have that the number of procedures in $\mathbb{B}_{\rm si}$ and $\mathbb{B}_{\rm si}^{\boxplus n}$ are the same, so that we simplify the scenario's description. 

For $\tilde{\mu}_j$, the ontic spaces are finite and we write $\tilde{\mu}_j(\{\lambda\}) \equiv \tilde{\mu}_j(\lambda)$. Let $j \in J$, we define that, if $j \in J_1,$

 \begin{equation*}
    \tilde{\mu}_j(\lambda) := \left\{ \begin{matrix}
    \mu_{j_1}(\lambda), & \text{ if } \lambda = (\lambda_1,1)\\
    0, & \text{ if } \lambda = (\lambda_2,2)
    \end{matrix}\right. 
\end{equation*}
and similarly if $j \in J_2$. Again, when $j \in J_1 \cap J_2$ we choose one of the ontological descriptions as our fixed definition for the preparation procedure associated with it.  With this definition we have that, for any $j \in J$,

\begin{equation*}
    \sum_{\lambda \in \Lambda}\tilde{\mu}_j(\lambda) = \sum_{\lambda=(\lambda_1,1) \in \Lambda}\tilde{\mu}_j(\lambda) +  \sum_{\lambda=(\lambda_2,2) \in \Lambda}\tilde{\mu}_j(\lambda) 
\end{equation*}
and, whenever $j \in J_1$ or $j \in J_2$ we recover the normalization condition from the already defined distributions in the parts. We then obtain that any $p(k|i,j) $ in $B_1 \boxplus B_2$  will have an ontological description,

\begin{align*}
    & \sum_{\lambda \in \Lambda_1 \sqcup \Lambda_2}\tilde{\xi}_{[k|i]}(\lambda)\tilde{\mu}_{j}(\lambda) = \left \{ \begin{matrix}
    \sum_{\lambda \in \Lambda}\tilde{\xi}_{[k_1|i_1]}(\lambda)\tilde{\mu}_{j_1}(\lambda), & \text{ if }k,i,j\in K_1,I_1,J_1 \\
    \sum_{\lambda \in \Lambda}\tilde{\xi}_{[k_2|i_2]}(\lambda)\tilde{\mu}_{j_2}(\lambda), & \text{ if } k,i,j\in K_2,I_2,J_2
    \end{matrix}\right.\\
    &=\left \{ \begin{matrix}
    \sum_{\lambda_1 \in \Lambda_1}\xi_{[k_1|i_1]}(\lambda_1)\mu_{j_1}(\lambda_1), & \text{ if } k,i,j\in K_1,I_1,J_1  \\
    \sum_{\lambda_2 \in \Lambda_2}\xi_{[k_2|i_2]}(\lambda_2)\mu_{j_2}(\lambda_2), & \text{ if }k,i,j\in K_2,I_2,J_2
    \end{matrix}\right. = \left \{ \begin{matrix}
    p(k_1|i_1,j_1), & \text{ if } k,i,j\in K_1,I_1,J_1  \\
    p(k_2|i_2,j_2), & \text{ if }k,i,j\in K_2,I_2,J_2
    \end{matrix}\right. = B_1 \boxplus B_2
\end{align*}

 Notice that in the new scenario $\mathbb{B}_1 \boxplus \mathbb{B}_2$ it is at play our operational constraint that the preparations of the parts do not interact with the measurements of one another. The operational equivalences defined in the scenario $\mathbb{B}_1 \boxplus \mathbb{B}_2$ are the ones from \eqref{op eq box scenario}, so we need to study the following objects:
\begin{equation}
    \sum_{j}(\alpha_{j}^a-\beta_j^a)\tilde{\mu}_j(\lambda), \,\,\, \forall \lambda \in \Lambda_1 \sqcup \Lambda_2 , \forall \bm{\gamma}^a \in  \mathbb{E}_{\mathbb{P}_1 \cup \mathbb{P}_2},
\end{equation}
It will be true, for all $\lambda \in \Lambda_1 \sqcup \Lambda_2$, the following,

\begin{align*}
    \sum_{j}(\alpha_{j}^{a}-\beta_j^{a})\tilde{\mu}_j(\lambda) &= \underbrace{ \sum_{j_1}(\alpha_{j_1}^{a}-\beta_{j_1}^{a})\tilde{\mu}_{j_1}(\lambda)}_{\stackrel{\eqref{eq op 1}}{=}0} + \underbrace{\sum_{j_2}(\alpha_{j_2}^{a}-\beta_{j_2}^{a})\tilde{\mu}_{j_2}(\lambda)}_{\stackrel{\eqref{op eq box scenario}}{=}0} = 0, \forall \bm{\gamma}^a_{\mathbb{P}_1\cup\mathbb{P}_2} = \bm{\gamma}^{a_1}_{\mathbb{P}_1\cup\mathbb{P}_2}
\end{align*}

\begin{align*}
    \sum_{j}(\alpha_{j}^{a}-\beta_j^{a})\tilde{\mu}_j(\lambda) &= \underbrace{ \sum_{j_1}(\alpha_{j_1}^{a}-\beta_{j_1}^{a})\tilde{\mu}_{j_1}(\lambda)}_{\stackrel{\eqref{op eq box scenario}}{=}0} + \underbrace{\sum_{j_2}(\alpha_{j_2}^{a}-\beta_{j_2}^{a})\tilde{\mu}_{j_2}(\lambda)}_{\stackrel{\eqref{eq op 2}}{=}0} = 0, \forall \bm{\gamma}^a_{\mathbb{P}_1\cup\mathbb{P}_2} = \bm{\gamma}^{a_2}_{\mathbb{P}_1\cup\mathbb{P}_2}
\end{align*}

And for all $\lambda \in \Lambda_1 \sqcup \Lambda_2$ we also have, $\forall \bm{\gamma}^b_{\mathbb{M}_1\cup\mathbb{M}_2} = \bm{\gamma}^{b_1}_{\mathbb{M}_1\cup\mathbb{M}_2}$, 

\begin{equation*}
    \sum_{k,i}(\alpha_{[k|i]}^b-\beta_{[k|i]}^b)\tilde{\xi}_{[k|i]}(\lambda) = \underbrace{\sum_{k_1,i_1}(\alpha_{[k_1|i_1]}^{b_1}-\beta_{[k_1|i_1]}^{b_1})\tilde{\xi}_{[k_1|i_1]}(\lambda_1)}_{\stackrel{\eqref{eq op 5}}{=}0} + \underbrace{\sum_{k_2,i_2}(\alpha_{[k_2|i_2]}^{b_1}-\beta_{[k_2|i_2]}^{b_1})\tilde{\xi}_{[k_2|i_2]}(\lambda_2)}_{\stackrel{\eqref{op eq box scenario}}{=}0} = 0,
\end{equation*}
and $ \forall \bm{\gamma}^b_{\mathbb{M}_1\cup\mathbb{M}_2} = \bm{\gamma}^{b_2}_{\mathbb{M}_1\cup\mathbb{M}_2}$, 
\begin{equation*}
    \sum_{k,i}(\alpha_{[k|i]}^b-\beta_{[k|i]}^b)\tilde{\xi}_{[k|i]}(\lambda) = \underbrace{\sum_{k_1,i_1}(\alpha_{[k_1|i_1]}^{b_2}-\beta_{[k_1|i_1]}^{b_2})\tilde{\xi}_{[k_1|i_1]}(\lambda_1)}_{\stackrel{\eqref{op eq box scenario}}{=}0} + \underbrace{\sum_{k_2,i_2}(\alpha_{[k_2|i_2]}^{b_1}-\beta_{[k_2|i_2]}^{b_1})\tilde{\xi}_{[k_2|i_2]}(\lambda_2)}_{\stackrel{\eqref{eq op 6}}{=}0} = 0.
\end{equation*}

This proves that the ontological model constructed  is noncontextual for the behavior $B_1 \boxplus B_2$ whenever $B_1,B_2$ are also noncontextual behaviors. 

For the $(\Leftarrow)$ part of the proof, suppose that the behavior $B_1 \boxplus B_2$ has a noncontextual ontological model $(\Sigma, \Lambda, \Pi, \Theta)$. We know that this $\mathbb{B}_1\boxplus \mathbb{B}_2$ scenario has the same operational equivalences as both the scenarios $\mathbb{B}_1$ and $\mathbb{B}_2$ divided, by means of the weight vectors, e.g.,    $(\alpha_1^{a_1}, \alpha_2^{a_1}, \dots, \alpha_{j_1}^{a_1}, 0, \dots, 0)$. Hence, there exists an ontological model for $B_1$ inherited from $B_1 \boxplus B_2$ using the operational equivalences:
\begin{equation*}
    \sum_j(\alpha_j^a-\beta_j^a)\mu_j(\lambda) = 0, \forall \lambda \implies \sum_{j_1} (\alpha_{j_1}^a-\beta_{j_1}^a)\mu_{j_1}(\lambda) = 0, \forall \lambda, \forall \bm{\gamma}^a_{\mathbb{P}_1\cup\mathbb{P}_2} \in \mathbb{E}_{\mathbb{P}_1\cup\mathbb{P}_2},
\end{equation*}
where we can restrict $\{a\}$ to some set of labels $\{a_1\}$ and reduced vectors $\bm{\gamma}_{\mathbb{P}}^{a_1}$ by cutting the zeros. We get the same for the behavior $B_2$. The ontological description of the probabilities we get immediately:
\begin{equation}
    p(k_1|i_1,j_1) := \sum_{\lambda \in \Lambda} \xi_{[k=k_1|i=i_1]}(\lambda)\mu_{j=j_1}(\lambda) ,
\end{equation}
for any $k_1,i_1,j_1 \in K_1,I_1,J_1$, and similarly,
\begin{equation}
    p(k_2|i_2,j_2) := \sum_{\lambda \in \Lambda} \xi_{[k=k_2|i=i_2]}(\lambda)\mu_{j=j_2}(\lambda).
\end{equation}

\end{dem}

\begin{exm}
Let us consider, as an example, the scenario from Ref.~\cite{lostaglio2019contextual}. The operational scenario can be described in the following terms: There are six preparations, $P_a,P_b$ associated with the preparation of the states that we want to clone. The preparations $P_\alpha, P_\beta$ correspond to procedures $P_a,P_b$ that have passed through a cloning machine, in other words, outputs of the cloning transformation (considered preparations on their own). Last but now least, we consider the preparations associated with having cloned the states $P_{aa}, P_{bb}$. We also include the orthogonal counterparts $P_{s^\perp}$, meaning that the probabilities are impossible to distinguish by one-shot measurements, in complete analogy with the prepare-and-measure scenarios from Ref.~\cite{schmid2018contextual}, in order to generate the operational equivalences necessary for generalized contextuality. This means that we have a set $\mathbb{P}:=\{P_s,P_{s^\perp} \,\, \vert \,\, s\in \{a,b,\alpha,\beta,aa,bb\}\}$ and the operational equivalences over the preparation procedures as defined by:
\begin{align}
    &\frac{1}{2}P_{a} + \frac{1}{2}P_{a^\perp} \simeq  \frac{1}{2}P_{b} + \frac{1}{2}P_{b^\perp} \label{cloning op equivalences 1}\\
    & \frac{1}{2}P_{\alpha} + \frac{1}{2}P_{\alpha^\perp}  \simeq  \frac{1}{2}P_{aa} + \frac{1}{2}P_{aa^\perp} \label{cloning op equivalences 2}\\
    &  \frac{1}{2}P_{\beta} + \frac{1}{2}P_{\beta^\perp} \simeq  \frac{1}{2}P_{bb} + \frac{1}{2}P_{bb^\perp}\label{cloning op equivalences 3} 
\end{align}
for the measurement procedures we have the six binary-outcome ones $M_s, s\in \{a,b,\alpha,\beta,aa,bb\}\}$.  Defining such scenario by $\mathbb{B}_{\rm qc}$ we notice that we can write, using the definition of $\boxplus$ from before, that
\begin{equation}\label{cloning scenario equality}
    \mathbb{B}_{\rm qc} = \mathbb{B}_{6}^{\boxplus 3}.
\end{equation}
Here, the scenario $\mathbb{B}_6$ is the scenario with the same operational structure of the scenario $\mathbb{B}_{\rm si}$ but with $6$ measurement procedures.  We needed to add some symmetries associated with the fact that the scenario $\mathbb{B}_6^{\boxplus 3}$ would have $18$ measurement procedures,  therefore making notice of the symmetry $I_1 = I_2 = I_3 \implies I := I_1 \cup I_2 \cup I_3 = I_1 = \{a,b,\alpha,\beta,aa,bb\}$. And also that $K_1 = K_2 = K_3$. Under these circumstances the equation \eqref{cloning scenario equality} between scenarios holds. This example is also helpful to discuss some aspects of the map $\boxplus$. With this consideration we then get that,
\begin{equation}\label{noncontextuality polytopes product}
    NC(\mathbb{B}_{\rm qc}) = NC(\mathbb{B}_6) \times NC(\mathbb{B}_6) \times NC(\mathbb{B}_6). 
\end{equation}

\end{exm}
 
\subsection*{\texorpdfstring{$l_1$}{l1}-distance}

We prove Lemma~\ref{lemma:Compositionl1-distance} as follows: Recall that the $l_1$-distance monotone is defined as,
\begin{equation}\label{eq: appendix def l1 distance}
    \mathsf{d}(B) := \min_{B^* \in NC(\mathbb{B})}\max_{i,j}\sum_k \vert p(k|i,j) - p^*(k|i,j) \vert .
\end{equation}
Let $B_1^*$ and $B_2^*$ be the noncontextual behaviors achieving the minimum in equation \eqref{eq: appendix def l1 distance}. Denoting $(p^*(k|i,j))_{k\in K, i\in I, j\in J} \equiv B_1^*\boxplus B_2^*$, 
\begin{align*}
    \mathsf{d}(B_1 \boxplus B_2) &\leq  \max_{i,j}\sum_k \vert p(k|i,j) - p^*(k|i,j)  \vert \\
    &= \max_{i,j} \left \{ \left\{ \sum_k \vert p_1(k_1|i_1,j_1) - p_1^*(k_1|i_1,j_1)  \vert \right\}\cup \left\{ \sum_k \vert p_2(k_2|i_2,j_2) - p_2^*(k_2|i_2,j_2)  \vert \right \} \right\} \\
    &=\max \left \{ \max_{i_1,j_1}\sum_k \vert p_1(k_1|i_1,j_1) - p_1^*(k_1|i_1,j_1)  \vert , \max_{i_2,j_2}\sum_k \vert p_2(k_2|i_2,j_2) - p_2^*(k_2|i_2,j_2)  \vert \right\}\\
    &= \max\left \{ \mathsf{d}(B_1), \mathsf{d}(B_2)\right\} \leq \mathsf{d}(B_1)+ \mathsf{d}(B_2).
\end{align*}

\section{Quantum realization in the simplest scenario} \label{sec:appendix1}
    Consider the so called simplest scenarios $\mathbb{B}_{\rm si}$ from Def.~\ref{def: Simplest scenario}. In this scenario there are no operational equivalences between the measurement procedures $\mathbb{M}:= \{M_1,M_2\}$. A possible quantum contextual realization within the simplest scenario was presented in Refs.~\cite{Pusey2018simplest, spekkens2009preparation}, where we simply consider that $M_1,M_2$ are given by the POVMs, $M_1=\frac{1}{\sqrt{2}}(\sigma_X + \sigma_Z),M_2=\frac{1}{\sqrt{2}}(\sigma_X - \sigma_Z)$. For the preparations we can simply take, as is usual, the preparations \eqref{preparation 00}-\eqref{preparation --}, such that this realization gives rise immediately to the correct operational equivalences for preparation procedures from $\mathbb{B}_{\rm si}$. For a sufficiently large number of repeated procedures, we get the following probabilistic data-table ( we omit probabilities corresponding to $1-p$ events).

\begin{table}[H]
\centering
\begin{tabular}{|c|c|c|c|c|}
\hline
 & $\rho^1$ & $\rho^2$ & $\rho^3$ & $\rho^4$  \\
\hline
$E_{1}^{1}$ & $0.1464$ & $0.8535$ & $0.1464$ & $0.8535$ \\
\hline 
$E_{1}^{2}$ & $0.8535$ & $0.1464$ & $0.1464$ & $0.8535$ \\
\hline
\end{tabular}
\caption{Data-table for the final statistics obtained by quantum predictions.}
\label{tab: behavior numbers}
\end{table}

Since we have a full characterization of the noncontextual polytope for such scenario~\cite{schmid2018all} we notice that Table~\ref{tab: behavior numbers} is \textit{contextual} because it violates Ineq.~\eqref{counter example equation}.
\begin{align*}
    p_{12}+p_{24}-p_{22}-p_{13} = 0.8535+0.8525-0.1464-0.1464 = 1.4132 > 1,
\end{align*}

\section{Permutations of contextual vertices}\label{appenidx: symmetry}

In this Appendix we prove Lemma~\ref{lemma:general violations}. There are three partial results we need: 

\begin{enumerate}
    \item The set of vertices of the polytope of all behaviors $\mathbb{B}_{\rm si}$, is the set of deterministic assignments that respect the operational equivalences associated with the preparation procedures. Therefore there is always a permutation $T_{v\to w}$ between any two vertices from $\mathbb{B}_{\rm si}$. Since any permutation is also a free operation, we need to show that $T_{v \to w}$ is a special kind of free operations, one that satisfy,
    \begin{equation*}
        B \in C(\mathbb{B})\setminus NC(\mathbb{B}) \Leftrightarrow T_{v\ \to w}(B) \in C(\mathbb{B})\setminus NC(\mathbb{B})
    \end{equation*}
    This will be true whenever $T_{v \to w}$ is a permutation of elements in the behavior $B \in \mathbb{B}_{\rm si}$. 
    \item To do so, we will use the contextual measure $\mathsf{d} : \mathbb{B} \to \mathbb{R}_+$. We need to show that 
    \begin{equation}
        B \in C(\mathbb{B})\setminus NC(\mathbb{B}) \implies \mathsf{d}(B) > 0.
    \end{equation}
    \item We need to show that for any quantum contextual behavior $B \in \mathbb{B}_{\rm si}^{\boxplus n}$, $n \geq 1$, there always exists some transformation $T_{v \to w}$ between contextual vertices of $\mathbb{B}_{\rm si}^{\boxplus n}$ that maintains this behavior inside the set $\mathbb{B}_{\rm si}^{\boxplus n}$.  This will be true for  $ \mathbb{B}_{\rm si} ^{\boxplus n}$, because for any $n \geq 1$, some of the features regarding the polytope structure of the simplest scenario will remain.  
\end{enumerate}

We then proceed with the demonstration of these lemmas.

\begin{lemma}\label{d measures contextuality indeed}
Let $\mathbb{B} := (|J|,|I|,|K|,\mathbb{E}_{\mathbb{M}},\mathbb{E}_{\mathbb{P}})$ be any finitely defined prepare-and-measure scenario. Let $\mathsf{d}: \mathbb{B} \to \mathbb{R}_+$ be the $l_1$-contextuality monotone defined in Lemma~\ref{lemma:Compositionl1-distance}. Then, the following holds:
 \begin{equation}
        B \in C(\mathbb{B})\setminus NC(\mathbb{B}) \implies \mathsf{d}(B) > 0.
    \end{equation}
\end{lemma}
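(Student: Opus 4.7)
The plan is to prove the contrapositive: if $\mathsf{d}(B) = 0$, then $B \in NC(\mathbb{B})$. The statement of the lemma then follows immediately.

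First I would observe that the max-of-$\ell_1$ expression $F(B, B^*) := \max_{i,j}\sum_k |p(k|M_i,P_j) - p^*(k|M_i,P_j)|$ is continuous in $B^*$ (and in fact is a norm on the ambient vector space of data-tables). Since $NC(\mathbb{B})$ is a convex polytope in $\mathbb{R}^n$ described by finitely many linear inequalities (cf. the characterization of \cite{schmid2018all}), it is closed and bounded, hence compact. Therefore the minimum in Definition~\ref{def: l1-distance} is in fact attained, so there exists $B^\star \in NC(\mathbb{B})$ with $\mathsf{d}(B) = F(B, B^\star)$.

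Now suppose $\mathsf{d}(B) = 0$. Then $F(B,B^\star) = 0$, which by the definition of the maximum and the non-negativity of each term $\sum_k |p(k|M_i,P_j) - p^\star(k|M_i,P_j)|$ forces every such sum to vanish. Since each summand is itself non-negative, we conclude $p(k|M_i,P_j) = p^\star(k|M_i,P_j)$ for all $k \in K$, $i \in I$, $j \in J$. Hence $B = B^\star$ as vectors of conditional probabilities, and so $B \in NC(\mathbb{B})$.

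Taking the contrapositive yields $B \notin NC(\mathbb{B}) \Longrightarrow \mathsf{d}(B) > 0$, which combined with the trivial bound $\mathsf{d}(B) \geq 0$ gives the claim. The only subtle point is ensuring the minimum is attained (rather than merely an infimum); the main obstacle is therefore just invoking compactness of $NC(\mathbb{B})$, which follows from the finite linear description. Alternatively, one could avoid compactness by picking a sequence $\{B^\star_n\} \subset NC(\mathbb{B})$ with $F(B, B^\star_n) \to 0$, noting that $F$ is equivalent to the Euclidean norm on the finite-dimensional ambient space, and using closedness of $NC(\mathbb{B})$ to conclude $B = \lim_n B^\star_n \in NC(\mathbb{B})$.
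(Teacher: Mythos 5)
Your proof is correct and follows essentially the same route as the paper's: both reduce to the observation that $\mathsf{d}(B)=0$ forces every term $\lvert p(k|M_i,P_j)-p^*(k|M_i,P_j)\rvert$ to vanish for some $B^*\in NC(\mathbb{B})$, hence $B=B^*\in NC(\mathbb{B})$ (the paper phrases it as a contradiction, you as a contrapositive). Your explicit justification that the minimum is attained via compactness of the polytope is a detail the paper leaves implicit, but it does not change the argument.
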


\begin{dem} 
Let $B \in C(\mathbb{B})\setminus NC(\mathbb{B})$. Then, $\mathsf{d}(B) = 0$ implies that,
\begin{align*}
    &\min_{B^* \in NC(\mathbb{B})}\max_{i,j}\sum_k \vert p(k|i,j) - p^*(k|i,j) \vert = 0 \implies \\
    &\exists \,\, B^* \in NC(\mathbb{B}), \max_{i,j}\sum_k \vert p(k|i,j) - p^*(k|i,j) \vert = 0,  \implies\\
    &\forall i\in I, \forall j\in J, \sum_k |p(k|i,j) - p^*(k | i,j) | = 0 \implies \\
    & \forall i\in I, \forall j \in J, \forall k \in K, \vert p(k|i,j) - p^*(k | i,j)  \vert = 0
\end{align*}
which implies that $B = B^*$, contradiction. Therefore, we must have that $\mathsf{d}(B) > 0$. \newline 
\end{dem}

\begin{lemma}
Let $\mathbb{B} := (|J|,|I|,|K|,\mathbb{E}_{\mathbb{M}},\mathbb{E}_{\mathbb{P}})$ be any finitely defined prepare-and-measure scenario. Let $P \in \mathcal{F}$ defined such that, for all $i \in I$, $q_O^i: K \to \tilde{K} = K$, $q_M : \tilde{I}=I \to I$, $q_P: \tilde{J}=J \to J$ are permutation matrices.  Then, 
\begin{equation}\label{completely free}
    B \in C(\mathbb{B})\setminus NC(\mathbb{B}) \Leftrightarrow P(B) \in C(\mathbb{B})\setminus NC(\mathbb{B}).
\end{equation}
Whenever $P \in \mathcal{F}$ is defined by permutation matrices we call $P$ a free permutation. If a free operation satisfy equation \eqref{completely free} we will refer to this operation as a completely free operation. 
\end{lemma}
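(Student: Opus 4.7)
The plan is to exploit the invertibility of permutation matrices. The inverse (equivalently, the transpose) of a permutation matrix is again a permutation matrix, hence a stochastic kernel; so the operation $P^{-1}$ built from the componentwise inverses $(q_O^i)^{-1}$, $q_M^{-1}$, $q_P^{-1}$ will itself be a free permutation in $\mathcal{F}$. Since the asserted biconditional is equivalent to its contrapositive $B \in NC(\mathbb{B}) \Leftrightarrow P(B) \in NC(\mathbb{B})$, it is enough to prove this noncontextual-side statement.

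The direction $B \in NC(\mathbb{B}) \Rightarrow P(B) \in NC(\mathbb{B})$ is the standard fact that a free operation cannot create the resource. Explicitly, from the monotone inequality $\mathsf{d}(P(B)) \leq \mathsf{d}(B)$ (recalled earlier in the paper), together with the observation that $\mathsf{d}$ vanishes precisely on $NC(\mathbb{B})$ --- one direction trivial from Definition \ref{def: l1-distance} by choosing $B^* = B$, the other being Lemma \ref{d measures contextuality indeed} --- one obtains $\mathsf{d}(B) = 0 \Rightarrow \mathsf{d}(P(B)) = 0 \Rightarrow P(B) \in NC(\mathbb{B})$.

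For the converse $P(B) \in NC(\mathbb{B}) \Rightarrow B \in NC(\mathbb{B})$, the strategy is to apply the same argument to $P^{-1}$. Two small checks are needed. First, that $P^{-1}$ sends $\mathbb{B}$ back to $\mathbb{B}$: the update rule \eqref{eq: UpdateEquivalences} on the equivalence classes under $P$ is multiplication by a permutation matrix, which is invertible, so the (possibly relabeled) equivalences of $P(\mathbb{B})$ are carried back to those of $\mathbb{B}$ by $P^{-1}$, and $P^{-1}$ meets the definition of a free operation for the scenario $\mathbb{B}$. Second, that $P^{-1} \circ P$ acts as the identity on behaviors; this is a direct unwinding of Definition \ref{def: free operations}, since the nested sums telescope via $\sum_k \delta_{\sigma^{-1}(k'),k}\,\delta_{k,\sigma(\tilde{k})} = \delta_{k',\tilde{k}}$ on each of the three index sets. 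With these in hand, applying monotonicity twice gives $\mathsf{d}(B) = \mathsf{d}(P^{-1}(P(B))) \leq \mathsf{d}(P(B)) \leq \mathsf{d}(B)$, so $\mathsf{d}(P(B)) = \mathsf{d}(B)$, and the characterization of $NC(\mathbb{B})$ as the zero-set of $\mathsf{d}$ closes the argument.

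The only potentially subtle point is the first of these two checks --- that the inverse permutation truly lands back inside $\mathcal{F}$ for the same scenario $\mathbb{B}$, rather than for some relabeled sibling of it. Once this bookkeeping on the equivalence classes is sorted, the proof reduces to two applications of the already-established monotonicity of $\mathsf{d}$, and no new inequality-type estimate is needed.
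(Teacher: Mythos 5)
Your proof is correct, and it reaches the same key identity as the paper, $\mathsf{d}(P(B))=\mathsf{d}(B)$, by a genuinely different mechanism in the nontrivial direction. The paper handles the easy direction exactly as you do (a free operation cannot create the resource, via monotonicity of $\mathsf{d}$ together with the lemma that $\mathsf{d}$ is strictly positive precisely off $NC(\mathbb{B})$); for the other direction it argues that the explicit min--max--sum formula defining the $l_1$-distance is manifestly invariant under a relabeling of the indices $k,i,j$, so $\mathsf{d}(B)=\mathsf{d}(P(B))$ by direct inspection of the formula. You instead obtain the same equality by noting that the inverse of a free permutation is again a free permutation and sandwiching with two applications of monotonicity, $\mathsf{d}(B)=\mathsf{d}(P^{-1}(P(B)))\leq\mathsf{d}(P(B))\leq\mathsf{d}(B)$. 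Your route buys two things: it works verbatim for any faithful monotone rather than relying on the specific form of $\mathsf{d}$, and it makes explicit a point the paper's computation leaves implicit, namely that the minimization set $NC(\mathbb{B})$ inside the definition of $\mathsf{d}$ must be mapped onto itself by $P$ for the relabeled minimum to equal the original one --- which is exactly the statement $P(NC(\mathbb{B}))=NC(\mathbb{B})$ that your invertibility argument supplies. The bookkeeping issue you flag, that $P^{-1}$ is a free operation for the \emph{same} scenario $\mathbb{B}$ (i.e.\ that the permuted equivalence vectors reproduce the original sets $\mathbb{E}_{\mathbb{P}},\mathbb{E}_{\mathbb{M}}$), is indeed the only delicate point; the paper glosses over it as well, and in the subsequent application the permutations used are symmetries of $\mathbb{B}_{\rm si}$, so the check is satisfied where it matters.
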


\begin{dem}
 We want to show that free permutations are completely free transformations. Let $B \in \mathbb{B}$, and let $P \in \mathcal{F}$ be a free permutation. Since $P \in \mathcal{F}$ we immediately have that 
 
 \begin{equation*}
      B \in C(\mathbb{B})\setminus NC(\mathbb{B}) \Leftarrow P(B) \in C(\mathbb{B})\setminus NC(\mathbb{B}).
 \end{equation*}
 
 Therefore, we need to show that the arrow $\Rightarrow$ holds. Note that the action of $P$ is simply to rearrange the labels. We denote this by using the tilde notation, 

\begin{equation*}
    B := \left(p(k|i,j)\right)_{k \in K, i\in I, j\in J} \mapsto P(B) = \left(p(\Tilde{k} | \Tilde{i}, \Tilde{j})\right)_{\Tilde{k}\in K, \Tilde{i} \in I, \Tilde{j} \in J}.
\end{equation*}

This implies that, using the $l_1$-monotone, 

\begin{align*}
    \mathsf{d}(B) = \min_{B^* \in NC(\mathbb{B})}\max_{i\in I,j\in J}\sum_{k \in K} \vert p(k|i,j) - p^*(k|i,j) \vert = \min_{B^* \in NC(\mathbb{B})}\max_{\tilde{i}\in I,\tilde{j}\in J}\sum_{\tilde{k} \in K} \vert p(\tilde{k}|\tilde{i},\tilde{j}) - p^*(\tilde{k}|\tilde{i},\tilde{j}) \vert = \mathsf{d}(P(B)).
\end{align*}
We can then conclude that, 
\begin{equation}
    B \in C(\mathbb{B})\setminus NC(\mathbb{B}) \stackrel{\text{Lemma } \ref{d measures contextuality indeed}}{\implies} \mathsf{d}(B) > 0 \implies \mathsf{d}(P(B)) > 0 \implies P(B) \in C(\mathbb{B})\setminus NC(\mathbb{B})
\end{equation}
whenever $P$ is a free operation constructed from permutation matrices.

\end{dem}

\begin{lemma}
Let $\mathbb{B}_{\rm si}$ be the simplest scenario, as defined in Def.~\ref{def: Simplest scenario}. The noncontextual polytope $NC(\mathbb{B}_{\rm si})$ has eight facet-defining noncontextuality inequalities, that we associate with affine-linear functionals $\{h_i\}_{i=1}^8$, with $h_i : \mathbb{R}^{8} \to \mathbb{R}$. Then, for each affine-functional $h_i$ there is one, and only one vertex $B_v \in V(\mathbb{B}_{\rm si})$, for $V(\mathbb{B}_{\rm si})$ the set of vertices of the convex polytope $\mathbb{B}_{\rm si}$, that violates the noncontextuality inequality associated with $h_i(B) \leq 0$. 
\end{lemma}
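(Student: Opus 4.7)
The plan is to do a direct case analysis by first characterizing the vertices of the behavior polytope $\mathbb{B}_{\rm si}$ and then checking, for each of the eight facet inequalities, which deterministic vertices can possibly violate it. Writing $p_{ij}:= p(1|M_i,P_j)$, the polytope $\mathbb{B}_{\rm si}$ lives in $[0,1]^{8}$ cut out by the single operational equivalence \eqref{operational eq simplest scenario}, which upon contracting outcomes $k=2$ via normalisation reduces to the two linear equalities
\begin{equation}
p_{11}+p_{12}=p_{13}+p_{14}, \qquad p_{21}+p_{22}=p_{23}+p_{24}.
\end{equation}
Standard results on the vertices of $[0,1]^{n}$-polytopes cut out by linear equalities (as used e.g. in Ref.~\cite{schmid2018all}) then give that $V(\mathbb{B}_{\rm si})$ consists exactly of the deterministic assignments $p_{ij}\in\{0,1\}$ obeying these two equalities.

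Next I will exploit the common shape of the eight facet inequalities listed in \eqref{correction to all}--\eqref{simplest equations 9}. Each of them is of the form
\begin{equation}
p_{1a}+p_{2b}-p_{1c}-p_{2d}\leq 1,
\end{equation}
with exactly one positive and one negative contribution for each of the two measurements, and with $\{a,c\}, \{b,d\}\subseteq\{1,2,3,4\}$ distinct indices. For a vertex $B_{v}$ to violate such an inequality we need the left-hand side to take a value strictly greater than $1$; since on a vertex every $p_{ij}\in\{0,1\}$, the left-hand side lies in $\{-2,-1,0,1,2\}$, hence must equal $2$. This forces $p_{1a}=p_{2b}=1$ and $p_{1c}=p_{2d}=0$, thereby determining four of the eight coordinates.

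I then plug these four fixed values into the two equivalence equalities, which split across the two measurements: the $M_1$-equality involves $p_{1a}=1$ and $p_{1c}=0$ and determines the remaining two coordinates $\{p_{1j}: j\notin\{a,c\}\}$ uniquely (one must be $0$ and the other $1$, dictated by parity), and analogously for $M_2$. The outcome is a single deterministic vertex, which by construction satisfies all membership constraints for $\mathbb{B}_{\rm si}$ and achieves $h_i(B_v)=2>1$. Existence of a violating vertex is also guaranteed a priori, since a redundant inequality could not be facet-defining; uniqueness is the new content and follows from the forcing argument above.

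The main obstacle I foresee is merely notational bookkeeping: there are eight inequalities and each one must be checked to have the one-positive/one-negative-term-per-measurement structure so that the equivalence constraints do uniquely fix the remaining coordinates. Once the structural pattern is verified once (using e.g. \eqref{counter example equation}: $p_{12}=1,\ p_{24}=1,\ p_{22}=0,\ p_{13}=0$ forces $p_{11}=0,\ p_{14}=1,\ p_{23}=0,\ p_{21}=1$), the same argument transports verbatim to the other seven inequalities thanks to the symmetry $P_{1}\leftrightarrow P_{2}$, $P_{3}\leftrightarrow P_{4}$, and $M_{1}\leftrightarrow M_{2}$ that permutes the facet list onto itself; invoking this symmetry avoids writing the eight essentially identical subcases explicitly.
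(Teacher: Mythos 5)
Your proof is correct, but it takes a genuinely different route from the paper's. The paper proves the lemma by brute force: it tabulates all $36$ vertices of $\mathbb{B}_{\rm si}$ (the $6\times 6$ table of deterministic assignments per measurement satisfying the equivalence), highlights the eight contextual ones, and explicitly evaluates each functional $h_i$ on its unique violating vertex. You instead give a structural forcing argument: on a deterministic vertex each facet functional takes values in $\{-2,\dots,2\}$, so violation forces the value $2$, which pins down four coordinates, and the two equalities $p_{11}+p_{12}=p_{13}+p_{14}$, $p_{21}+p_{22}=p_{23}+p_{24}$ pin down the remaining four. This is shorter, simultaneously yields existence and uniqueness, and explains \emph{why} the bijection-like correspondence holds rather than merely exhibiting it; the paper's enumeration, on the other hand, also produces the explicit table of contextual vertices that it reuses in the subsequent vertex-transitivity argument, which your approach does not deliver for free. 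Two soft spots to tighten: (i) the structural hypothesis you state ("distinct indices") is not sufficient for the parity step --- you need that for each measurement the positive and the negative index lie on \emph{opposite} sides of the operational equivalence (one in $\{1,2\}$, the other in $\{3,4\}$); if both lay on the same side, the equality would read $1=p_{1x}+p_{1y}$ and admit two completions, breaking uniqueness. All eight facets in \eqref{simplest equations 1}--\eqref{simplest equations 9} do satisfy the opposite-sides property, but that is exactly the check that cannot be skipped. (ii) The symmetry group generated by $P_1\leftrightarrow P_2$, $P_3\leftrightarrow P_4$, $M_1\leftrightarrow M_2$ preserves the facet list but acts with \emph{two} orbits of size four (e.g.\ $\{h_3,h_4,h_5,h_7\}$ and $\{h_1,h_2,h_6,h_8\}$, as one sees after reducing modulo the equalities), so it does not reduce the verification to a single base case; since your forcing argument is uniform over all facets with the opposite-sides property anyway, the symmetry shortcut is unnecessary and better dropped.
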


\begin{dem}
Writing the functionals $h_i(B) \leq 0$ is equivalent to write the noncontextuality inequalities given by equations \eqref{simplest equations 1}-\eqref{simplest equations 9}. The polytope $NC(\mathbb{B}_{\rm si})$ has many other noncontextuality inequalities but these constitute the non-trivial tight noncontextuality inequalities. The lemma is proven then by construction: Table~\ref{tab: all behaviors vertices contextual} has all elements of $V(\mathbb{B}_{\rm si})$, and we highlighted the contextual vertices. $B \equiv \left(\begin{matrix}p_{11},p_{12},p_{13},p_{14}\\p_{21},p_{22},p_{23},p_{24}\end{matrix}\right)$.

\begin{table}[H]
\centering
\scalebox{1}{
\begin{tabular}{|c|c|c|c|c|c|c|}
\hline
 & $(1,1,1,1)$ & $(1,0,0,1)$ & $(1,0,1,0)$ & $(0,1,0,1)$  & $(0,1,1,0)$ & $(0,0,0,0)$\\
\hline
$(1,1,1,1)$ & $\left(\begin{matrix}1,1,1,1\\1,1,1,1\end{matrix}\right)$ & $\left(\begin{matrix}1,0,0,1\\1,1,1,1\end{matrix}\right)$ & $\left(\begin{matrix}1,0,1,0\\1,1,1,1\end{matrix}\right)$ & $\left(\begin{matrix}0,1,0,1\\1,1,1,1\end{matrix}\right)$  & $\left(\begin{matrix}0,1,1,0\\1,1,1,1\end{matrix}\right)$ & $\left(\begin{matrix}0,0,0,0\\1,1,1,1\end{matrix}\right)$\\
\hline
$(1,0,0,1)$ & $\left(\begin{matrix}1,1,1,1\\1,0,0,1\end{matrix}\right)$ & $\left(\begin{matrix}1,0,0,1\\1,0,0,1\end{matrix}\right)$ & $\textcolor{orange}{\left(\begin{matrix}1,0,1,0\\1,0,0,1\end{matrix}\right)}$ & $\textcolor{orange}{\left(\begin{matrix}0,1,0,1\\1,0,0,1\end{matrix}\right)}$  & $\left(\begin{matrix}0,1,1,0\\1,0,0,1\end{matrix}\right)$ & $\left(\begin{matrix}0,0,0,0\\1,0,0,1\end{matrix}\right)$\\
\hline
$(1,0,1,0)$ & $\left(\begin{matrix}1,1,1,1\\1,0,1,0\end{matrix}\right)$ & $\textcolor{orange}{\left(\begin{matrix}1,0,0,1\\1,0,1,0\end{matrix}\right)}$ & $\left(\begin{matrix}1,0,1,0\\1,0,1,0\end{matrix}\right)$ & $\left(\begin{matrix}0,1,0,1\\1,0,1,0\end{matrix}\right)$  & $\textcolor{orange}{\left(\begin{matrix}0,1,1,0\\1,0,1,0\end{matrix}\right)}$ & $\left(\begin{matrix}0,0,0,0\\1,0,1,0\end{matrix}\right)$\\
\hline
$(0,1,0,1)$ & $\left(\begin{matrix}1,1,1,1\\0,1,0,1\end{matrix}\right)$ & $\textcolor{orange}{\left(\begin{matrix}1,0,0,1\\0,1,0,1\end{matrix}\right)}$ & $\left(\begin{matrix}1,0,1,0\\0,1,0,1\end{matrix}\right)$ & $\left(\begin{matrix}0,1,0,1\\0,1,0,1\end{matrix}\right)$  & $\textcolor{orange}{\left(\begin{matrix}0,1,1,0\\0,1,0,1\end{matrix}\right)}$ & $\left(\begin{matrix}0,0,0,0\\0,1,0,1\end{matrix}\right)$\\
\hline
$(0,1,1,0)$ & $\left(\begin{matrix}1,1,1,1\\0,1,1,0\end{matrix}\right)$ & $\left(\begin{matrix}1,0,0,1\\0,1,1,0\end{matrix}\right)$ & $\textcolor{orange}{\left(\begin{matrix}1,0,1,0\\0,1,1,0\end{matrix}\right)}$ & $\textcolor{orange}{\left(\begin{matrix}0,1,0,1\\0,1,1,0\end{matrix}\right)}$  & $\left(\begin{matrix}0,1,1,0\\0,1,1,0\end{matrix}\right)$ & $\left(\begin{matrix}0,0,0,0\\0,1,1,0\end{matrix}\right)$\\
\hline
$(0,0,0,0)$ & $\left(\begin{matrix}1,1,1,1\\0,0,0,0\end{matrix}\right)$ & $\left(\begin{matrix}1,0,0,1\\0,0,0,0\end{matrix}\right)$ & $\left(\begin{matrix}1,0,1,0\\0,0,0,0\end{matrix}\right)$ & $\left(\begin{matrix}0,1,0,1\\0,0,0,0\end{matrix}\right)$  & $\left(\begin{matrix}0,1,1,0\\0,0,0,0\end{matrix}\right)$ & $\left(\begin{matrix}0,0,0,0\\0,0,0,0\end{matrix}\right)$\\
\hline
\end{tabular}
}
\caption{Table of deterministic vertices from $C(\mathbb{B}_{\rm si})$. We highlight the contextual vertices for the simplest scenario. Each of these vertices violate one of the tight noncontextuality inequalities defined by \eqref{simplest equations 1}-\eqref{simplest equations 9}. Note that for this scenario noncontextual vertices of $NC(\mathbb{B}_{\rm si})$ need not be deterministic.}
\label{tab: all behaviors vertices contextual}
\end{table}

Each one of these vertices violate one, and only one noncontextuality inequality.

\begin{align*}
    &h_1(B) = p_{12}+p_{22}-p_{14}-p_{23} - 1 \leq 0 \to h_1\left(\textcolor{orange}{\left(\begin{matrix}0,1,1,0\\0,1,0,1\end{matrix}\right)}\right) > 0\\
    &h_2(B) = p_{12}+p_{22}-p_{13}-p_{24} - 1 \leq 0 \to h_2\left(\textcolor{orange}{\left(\begin{matrix}0,1,0,1\\0,1,1,0\end{matrix}\right)}\right) > 0\\
    &h_3(B) = p_{22}+p_{13}-p_{12}-p_{24} - 1 \leq 0 \to h_3\left(\textcolor{orange}{\left(\begin{matrix}1,0,1,0\\0,1,1,0\end{matrix}\right)}\right) > 0\\
    &h_4(B) = p_{12}+p_{23}-p_{22}-p_{14} - 1 \leq 0 \to h_4\left(\textcolor{orange}{\left(\begin{matrix}0,1,1,0\\1,0,1,0\end{matrix}\right)}\right) > 0\\
    &h_5(B) = p_{22}+p_{14}-p_{12}-p_{23} - 1 \leq 0 \to h_5\left(\textcolor{orange}{\left(\begin{matrix}1,0,0,1\\0,1,0,1\end{matrix}\right)}\right) > 0\\
    &h_6(B) = p_{23}+p_{14}-p_{12}-p_{22} - 1 \leq 0 \to h_6\left(\textcolor{orange}{\left(\begin{matrix}1,0,0,1\\1,0,1,0\end{matrix}\right)}\right) > 0\\
    &h_7(B) = p_{12}+p_{24}-p_{22}-p_{13} - 1 \leq 0 \to h_7\left(\textcolor{orange}{\left(\begin{matrix}0,1,0,1\\1,0,0,1\end{matrix}\right)}\right) > 0\\
    &h_8(B) = p_{13}+p_{24}-p_{22}-p_{12} - 1 \leq 0 \to h_8\left(\textcolor{orange}{\left(\begin{matrix}1,0,1,0\\1,0,0,1\end{matrix}\right)}\right) > 0
\end{align*}

In terms of the polytope structure, each contextual behavior $B \in \mathbb{B}_{\rm si}$ violates at least one inequality, and therefore for some $h \in \{h_i\}_{i=1}^8$ and some $B_v \in V(\mathbb{B})$, we have that both $h(B) > 0$ and $h(B_v)>0$. 
\end{dem}

\begin{lemma}
Whenever $B_v,B_w \in V(\mathbb{B}_{\rm si})$ are contextual vertices, there exists a free permutation $T_{v \to w}$ satisfying the following: 
\begin{equation}
    \forall B \in \mathbb{B}_{\rm si} \implies T_{v \to w}(B) \in \mathbb{B}_{\rm si}.
\end{equation}
Since free permutations are completely free operations, we have that there exists a quantum contextual behavior violating all noncontextuality inequalities of $\mathbb{B}_{\rm si}$. 
\end{lemma}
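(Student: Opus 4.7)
The plan is to prove existence of the free permutation $T_{v\to w}$ by exhibiting the symmetry group of $\mathbb{B}_{\rm si}$ and showing it acts transitively on the contextual vertices, and then transport a known quantum contextual realization along this orbit to cover every facet.

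First, I would identify the group $G$ of free permutations leaving $\mathbb{B}_{\rm si}$ invariant. An element of $G$ is a triple $(\sigma_{P},\sigma_{M},\sigma_{O})$, where $\sigma_{P}\in S_{4}$ is a permutation of the four preparations that preserves the partition $\{\{P_{1},P_{2}\},\{P_{3},P_{4}\}\}$ (so that the only operational equivalence \eqref{operational eq simplest scenario} is preserved), $\sigma_{M}\in S_{2}$ permutes the two measurements arbitrarily (since $\mathbb{E}_{\mathbb{M}}=\emptyset$), and $\sigma_{O}=(\sigma_{O}^{1},\sigma_{O}^{2})\in S_{2}\times S_{2}$ flips the outcome of each measurement independently. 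Such a triple defines $q_{P},q_{M},q_{O}^{i}$ as $0/1$ permutation matrices, hence a free permutation in the sense of Definition \ref{def: free operations}; by construction the induced map satisfies $T(\mathbb{B}_{\rm si})=\mathbb{B}_{\rm si}$, so the required closure property of the statement is automatic. The order of $G$ is $8\cdot 2\cdot 4 = 64$.

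Next, I would verify that $G$ acts transitively on the $8$ contextual vertices highlighted in Table \ref{tab: all behaviors vertices contextual}. It suffices to show, with one reference vertex (say $v_{7}$, the one violating $h_{7}$), that each other contextual vertex $v_{j}$ is in its orbit. The generators have a very clean action: swapping $P_{3}\leftrightarrow P_{4}$ (or $P_{1}\leftrightarrow P_{2}$) permutes entries within a row block; swapping the two measurements interchanges the two rows of the vertex; flipping an outcome of $M_{i}$ replaces the corresponding row $(a,b,c,d)$ by $(1-a,1-b,1-c,1-d)$; and swapping the two preparation pairs interchanges columns $\{1,2\}\leftrightarrow\{3,4\}$. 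With these generators one checks directly that the orbit of $v_{7}$ realizes all eight highlighted entries of Table \ref{tab: all behaviors vertices contextual}. Equivalently, the induced action of $G$ on the eight tight inequalities $\{h_{i}\}_{i=1}^{8}$ is transitive, which is the same computation from the dual side. This case analysis is the main (but routine) obstacle; it is essentially a small combinatorial enumeration on a group of order $64$.

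With transitivity of $G$ established, for each pair of contextual vertices $v,w$ we choose $T_{v\to w}\in G$ and obtain the free permutation required by the statement. For the second assertion, start from the quantum contextual behavior $B^{Q}\in\mathbb{B}_{\rm si}$ given in Appendix \ref{sec:appendix1}, which violates $h_{7}$. For every nontrivial tight inequality $h_{j}$, pick $T_{v_{7}\to v_{j}}\in G$; since this is just a relabeling of POVM elements, states and outcomes in the underlying quantum description, $T_{v_{7}\to v_{j}}(B^{Q})$ is again quantum realizable. Because free permutations are completely free by the preceding lemma, $T_{v_{7}\to v_{j}}(B^{Q})$ remains contextual, and the transported linear functional $h_{7}\circ T_{v_{7}\to v_{j}}^{-1}$ coincides (up to a harmless sign/normalization accounted for by the orbit action) with $h_{j}$, so $h_{j}\bigl(T_{v_{7}\to v_{j}}(B^{Q})\bigr)>0$. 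This exhibits, for each tight nontrivial noncontextuality inequality of $\mathbb{B}_{\rm si}$, a quantum contextual behavior violating it, completing the proof.
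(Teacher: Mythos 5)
Your proposal is correct and follows essentially the same route as the paper: exhibit a group of free permutations (preparation relabelings preserving the partition $\{\{P_1,P_2\},\{P_3,P_4\}\}$, measurement swaps, and relabelings of outcomes) that acts transitively on the eight contextual vertices, then transport the known quantum contextual behavior along the orbit to violate every tight nontrivial inequality. The only cosmetic difference is that you work with the full order-$64$ group including outcome flips and verify transitivity by computing one orbit, whereas the paper uses four explicit generators $\alpha,\beta,\gamma,\delta$ (with trivial $q_O$) and establishes vertex-transitivity of $c(\mathbb{B}_{\rm si})$ via a connected graph on the contextual vertices.
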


\begin{dem}
Let $\Gamma(\mathbb{B}_{\rm si})$ be the symmetry group of the polytope $\mathbb{B}_{\rm si} \subset \mathbb{R}^8 \subset \mathbb{R}^{16}$. For any convex polytope $\mathbb{B}_{\rm si}$, the symmetry group $\Gamma(\mathbb{B}_{\rm si})$ is finite. To each symmetry $\alpha \in \Gamma(\mathbb{B}_{\rm si})$ we associate a (faithful) representation $T: \Gamma(\mathbb{B}_{\rm si}) \to \text{Aut}(\mathbb{R}^8)$. Define $c(\mathbb{B}_{\rm si}) := \text{ConvHull}\left( V(\mathbb{B}_{\rm si})\setminus V(NC(\mathbb{B}_{\rm si}))\right)$. Then, we have that the polytope $c(\mathbb{B}_{\rm si})$ is vertex-transitive, i.e., there exists an element of $\Gamma(\mathbb{B}_{\rm si})$ that sends any vertex of $c(\mathbb{B}_{\rm si})$ to any other vertex.  To see this we consider the group elements $\alpha,\beta,\gamma, \delta$ defined by their representation matrices as, for any point $\left(\begin{matrix}p_{11},p_{12},p_{13},p_{14}\\p_{21},p_{22},p_{23},p_{24}\end{matrix}\right) \in \mathbb{B}_{\rm si}$: 

\begin{table}[H]
\centering
\begin{tabular}{|c|c|c|c|c|}
\hline
 Element & Result  & $q_O$ & $q_M$ & $q_P$  \\
\hline
$\alpha$ & $\left(\begin{matrix}p_{12},p_{11},p_{13},p_{14}\\p_{22},p_{21},p_{23},p_{24}\end{matrix}\right)$ & $\left(\begin{matrix}1&0\\0&1\end{matrix}\right)$ & $\left(\begin{matrix}1&0\\0&1\end{matrix}\right)$ & $\left(\begin{matrix}0&1&0&0\\1&0&0&0\\0&0&1&0\\0&0&0&1\end{matrix}\right)$ \\
\hline 
$\beta$ & $\left(\begin{matrix}p_{11},p_{12},p_{14},p_{13}\\p_{21},p_{22},p_{24},p_{23}\end{matrix}\right)$ & $\left(\begin{matrix}1&0\\0&1\end{matrix}\right)$ & $\left(\begin{matrix}1&0\\0&1\end{matrix}\right)$ & $\left(\begin{matrix}1&0&0&0\\0&1&0&0\\0&0&0&1\\0&0&1&0\end{matrix}\right)$ \\
\hline
$\gamma$ & $\left(\begin{matrix}p_{21},p_{22},p_{23},p_{24}\\p_{11},p_{12},p_{13},p_{14}\end{matrix}\right)$ & $\left(\begin{matrix}1&0\\0&1\end{matrix}\right)$ & $\left(\begin{matrix}0&1\\1&0\end{matrix}\right)$ & $\left(\begin{matrix}1&0&0&0\\0&1&0&0\\0&0&1&0\\0&0&0&1\end{matrix}\right)$ \\
\hline
$\delta$ & $\left(\begin{matrix}p_{13},p_{14},p_{11},p_{12}\\p_{23},p_{24},p_{21},p_{22}\end{matrix}\right)$ & $\left(\begin{matrix}1&0\\0&1\end{matrix}\right)$ & $\left(\begin{matrix}1&0\\0&1\end{matrix}\right)$ & $\left(\begin{matrix}0&0&1&0\\0&0&0&1\\1&0&0&0\\0&1&0&0\end{matrix}\right)$ \\
\hline
\end{tabular}
\caption{Free permutations acting over $\mathbb{B}_{\rm si}$. Each one of these operations is such that they leave the polytope $\mathbb{B}_{\rm si}$ invariant. They also leave $c(\mathbb{B})_{si}$ invariant, meaning that each vertex is sent to another vertex in the polytope. }
\label{tab: free operations actions}
\end{table}
The free operations that we associate with the group elements $\alpha,\beta,\gamma,\delta \in \Gamma(\mathbb{B}_{\rm si})$, present in Table~\ref{tab: free operations actions} they constitute free permutations. Each one of these operations is such that they constitute a proof that $c(\mathbb{B}_{\rm si})$ is a vertex-transitive convex polytope. To prove so we simply notice that we can make a graph, such that each contextual point is a vertex of the graph, and each edge is a free permutation between the vertices. Figure~\ref{fig:transformation between contextual vertices} we construct this graph; since this graph is completely connected, there is always a symmetry (free permutation) between any two vertices\footnote{The set $\mathcal{F}$ of free operation is closed under composition. } and we conclude that $c(\mathbb{B}_{\rm si})$ is a vertex-transitive convex polytope.
\begin{figure}[H]
    \centering
    \scalebox{2}{
    \begin{tikzpicture}
    \draw[fill] (0,0) circle[radius=0.05];
    \draw[fill] (0,-1) circle[radius=0.05];
    \draw[fill] (1,-2) circle[radius=0.05];
    \draw[fill] (2,-2) circle[radius=0.05];
    \draw[fill] (3,-1) circle[radius=0.05];
    \draw[fill] (3,0) circle[radius=0.05];
    \draw[fill] (2,1) circle[radius=0.05];
    \draw[fill] (1,1) circle[radius=0.05];
    \draw[blue] (0,0) -- (1,1);
    \draw[blue] (0,-1) -- (2,1);
    \draw[blue] (1,-2) -- (3,0);
    \draw[blue] (2,-2) -- (3,-1);
    \draw[magenta] (0,0) -- (2,-2);
    \draw[magenta] (1,1) -- (3,-1);
    \draw[black] (0,-1) -- (1,-2);
    \draw[black] (2,1) -- (3,0);
    \node[blue] at (0.35,0.65) {\scriptsize $\gamma$};
    \node[blue] at (1.55,0.75) {\scriptsize $\gamma$};
    \node[blue] at (2.15,-0.65) {\scriptsize $\gamma$};
    \node[blue] at (2.35,-1.45) {\scriptsize $\gamma$};
    \draw[teal] (0,0) -- (3,0);
    \draw[teal] (0,-1) -- (3,-1);
    \draw[teal] (1,1) -- (1,-2);
    \draw[teal] (2,1) -- (2,-2);
    \node[magenta] at (0.3,-0.5) {\scriptsize $\alpha$};
    \node[magenta] at (2.95,-0.5) {\scriptsize $\alpha$};
    \node[black] at (2.75,0.5) {\scriptsize $\beta$};
    \node[black] at (0.3,-1.5) {\scriptsize $\beta$};
    \node[teal] at (0.85,0.35) {\scriptsize $\delta$};
    \node[teal] at (1.85,0.35) {\scriptsize $\delta$};
    \node[teal] at (2.55,0.25) {\scriptsize $\delta$};
    \node[teal] at (2.55,-0.85) {\scriptsize $\delta$};
    \end{tikzpicture}
    }
    \caption{Graph of transformations between contextual vertices.}
    \label{fig:transformation between contextual vertices}
\end{figure}
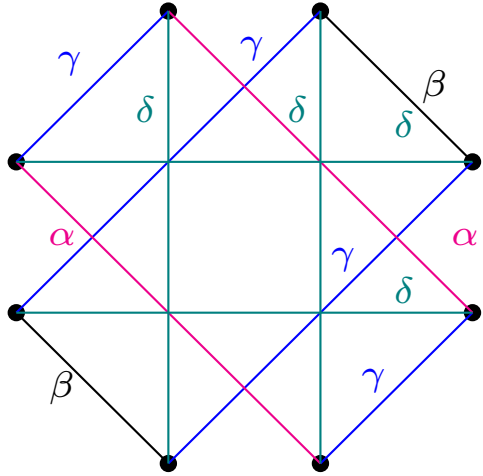

Since there exists a bijection between nontrivial violations given by equations \eqref{simplest equations 1}-\eqref{simplest equations 9}, and contextual vertices in the scenario $\mathbb{B}_{\rm si}$ we have that if $B \in \mathbb{B}_{\rm si}$ violates an inequality associated with $B_v$ a contextual vertex, since $c(\mathbb{B}_{\rm si})$ is vertex-transitive there exists a symmetry $T_{v \to w}$, that is a free permutation that can be read from figure \ref{fig:transformation between contextual vertices}, such that $T_{v \to w}(B)$ violates any other noncontextuality inequality associated with any other contextual vertex $B_w \in V(c(\mathbb{B}_{\rm si}))$. Therefore, let $B^Q \in \mathbb{B}_{\rm si}$ be the  quantum contextual behavior given by Table~\ref{tab: behavior numbers}. There exists a free permutation that sends $B^Q$ to a region of the polytope $\mathbb{B}_{\rm si}$ that violates any other noncontextuality inequality.  

\end{dem}

Now that we know that this is a true feature of $\mathbb{B}_{\rm si}$, we can prove the result from the text (Lemma~\ref{lemma:general violations}), that we rewrite here. 

\begin{lemma}
For any scenario of the form  $\mathbb{B}:=\mathbb{B}_{\rm si}^{\boxplus n}$, $n\geq 1$, every tight noncontextuality inequality will have a quantum contextual behavior. 
\end{lemma}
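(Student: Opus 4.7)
The plan is to combine three ingredients already available in the paper: (i) the product-polytope structure given by Lemma~\ref{lemma: product convex polytopes}, (ii) the resource-preservation property of $\boxplus$ in Theorem~\ref{theo:block preserve resource}, and (iii) the symmetry-based fact, established in this appendix, that for $\mathbb{B}_{\rm si}$ every one of the eight tight nontrivial noncontextuality inequalities admits a quantum contextual violation. The induction on $n$ is trivial, so the real content is to lift a violation from one copy of $\mathbb{B}_{\rm si}$ to a violation of the corresponding facet in $\mathbb{B}_{\rm si}^{\boxplus n}$.

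First I would make explicit the facet structure of $NC(\mathbb{B}_{\rm si}^{\boxplus n})$. By Theorem~\ref{theo:block preserve resource} together with the convex-polytope product formula in Lemma~\ref{lemma: product convex polytopes}, we have
\begin{equation}
NC(\mathbb{B}_{\rm si}^{\boxplus n}) \;=\; NC(\mathbb{B}_{\rm si})\times\cdots\times NC(\mathbb{B}_{\rm si}),
\end{equation}
and therefore every facet of $NC(\mathbb{B}_{\rm si}^{\boxplus n})$ is of the form ``a facet of the $\ell$-th copy of $NC(\mathbb{B}_{\rm si})$ lifted trivially to the product''. Concretely, each tight nontrivial inequality of $NC(\mathbb{B}_{\rm si}^{\boxplus n})$ is indexed by a pair $(\ell, h)$ with $\ell\in\{1,\dots,n\}$ and $h\in\{h_1,\dots,h_8\}$ one of the eight tight inequalities from the appendix, and it reads $h(B_\ell)\le 0$, where $B_\ell$ is the $\ell$-th slot of $B=B_1\boxplus\cdots\boxplus B_n$.

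Second, I would construct, for each such facet $(\ell,h)$, an explicit quantum behavior of $\mathbb{B}_{\rm si}^{\boxplus n}$ that violates it. In slot $\ell$, place a quantum contextual behavior of $\mathbb{B}_{\rm si}$ that violates $h$; this exists by the symmetry argument already proved in this appendix, which starts from the behavior of Table~\ref{tab: behavior numbers} and uses the free permutations in Fig.~\ref{fig:transformation between contextual vertices} to move the violation to any of the eight facets. In every other slot $\ell'\neq\ell$, place a quantum realization of some noncontextual behavior of $\mathbb{B}_{\rm si}$ (for instance, a trivial one obtained by assigning a single fixed quantum state to all four preparations, which is noncontextual and trivially quantum). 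Since a facet inequality on the product only depends on its own slot, the composite behavior violates the inequality $h(B_\ell)\le 0$ by construction, and each slot is a quantum realization, so the whole $B_1\boxplus\cdots\boxplus B_n$ is a quantum behavior of $\mathbb{B}_{\rm si}^{\boxplus n}$.

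The main subtlety, and the step I expect to be the most delicate, is the identification of the facets of $NC(\mathbb{B}_{\rm si}^{\boxplus n})$ with the lifted inequalities $h(B_\ell)\le 0$; one must check that no genuinely new tight inequality mixing variables from different slots appears. This is guaranteed by the product structure established through Theorem~\ref{theo:block preserve resource} and the facet count $|F(P\times Q)|=|F(P)|+|F(Q)|$ from Lemma~\ref{lemma: product convex polytopes}: the count matches exactly the set of lifted inequalities, so no others can be tight. Once this is secured, the construction above witnesses quantum contextuality on every nontrivial tight facet, completing the proof.
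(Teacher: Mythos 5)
Your proposal is correct and follows essentially the same route as the paper's appendix proof: both rest on the product-facet structure $NC(\mathbb{B}_{\rm si}^{\boxplus n}) = NC(\mathbb{B}_{\rm si})\times\cdots\times NC(\mathbb{B}_{\rm si})$ (so every tight nontrivial inequality acts on a single slot) together with the vertex-transitivity/free-permutation argument showing that each of the eight nontrivial facets of $NC(\mathbb{B}_{\rm si})$ admits a quantum violation. The only difference is presentational: the paper packages the lifting as an induction on $n$ and leaves the content of the remaining slots implicit, whereas you build the composite behavior explicitly by filling the other slots with quantum noncontextual behaviors --- a useful clarification, but not a different argument.
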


\begin{dem}
Since we know that this is true for $n=1$ we can try to prove this by induction. Let the set of non-trivial noncontextuality inequalities for $\mathbb{B}_{\rm si}$ be defined by $H_{1} := \{h_i: h_i(B) \leq 0, i \in \{1,\dots,8\}\}$. For $\mathbb{B}$ and $\mathbb{B}_{\rm si}^{\boxplus (n-1)}$ we use similar definitions, denoting $H$ the set of functionals corresponding to inequalities of $NC(\mathbb{B})$, and $H_{n-1}$ for $NC(\mathbb{B}_{\rm si}^{\boxplus (n-1)})$. Since we know that $NC(\mathbb{B}) = NC(\mathbb{B}_{\rm si})\times NC(\mathbb{B}_{\rm si}^{\boxplus (n-1)})$ we have that $H$ is the set of linear functionals such that, for all $\mathbb{B} \ni B = B_1 \boxplus B_2$,  
\begin{equation}\label{property of facets in a product polytope}
H := \{h : h(B) = h_1(B_1)\leq 0, \text{for some }h_1 \in H_1\} \cup \{h : h(B) = h_2(B_2)\leq 0, \text{for some }h_2 \in H_{n-1}\}
\end{equation}

Therefore, as our hypothesis we suppose that there exists a quantum contextual behavior $B^Q$ such that, for all $h \in H_{n-1}$ we have that $h(B^Q) > 0$. Since, for $H_1$ we know that this is also true and that $H$ is given by equation \eqref{property of facets in a product polytope}, we have that for every $h \in H$ there exists some $B^Q \in \mathbb{B}$ such that $h(B^Q) > 0$. We conclude that this must be true for all $n \geq 1$. 
\end{dem}

Notice that equation \eqref{property of facets in a product polytope} is another way of stating that the number of facets in a convex polytope that is the product of two convex polytopes gets summed. This is clear by noticing that we can associate the sets $H$ to matrices that define the convex polytope via an H-representation, and by proving Lemma~\ref{lemma: product convex polytopes}, from where it is clear how the set of inequalities (convex-linear functions in the terminology of the sets $H$) is upgrated for the product polytope.

\end{appendix}

\end{document}